\DeclareMathOperator{\E}{\mathbb{E}}
\DeclareMathOperator{\pr}{\mathbb{P}}
\title{Decoding Balanced Linear Codes With Preprocessing}
\author{Andrej Bogdanov\footnote{{\tt abogdano@uottawa.ca}. University of Ottawa.} \and Rohit Chatterjee\footnote{{\tt rochat@nus.edu.sg}. Department of Computer Science, National University of Singapore.} \and Yunqi Li\footnote{{\orcidlink{0009-0001-3087-3190}} {\tt yunqili@comp.nus.edu.sg}. Department of Computer Science, National University of Singapore.} \and Prashant Nalini Vasudevan\footnote{\orcidlink{0000-0001-6880-795X} {\tt prashvas@nus.edu.sg}. Department of Computer Science, National University of Singapore.}}
\date{ }
\begin{document}

\maketitle

\newif\ifdraft
\drafttrue
\ifdraft
\newcommand{\note}[1]{[{\footnotesize  \textcolor{cyan}{\bf Note:} {#1}}]}
\newcommand{\ynote}[1]{[{\footnotesize \color{cyan}{\bf Yunqi:} { {#1}}}]}
\newcommand{\pnote}[1]{[{\footnotesize \color{orange}{\bf Prashant:} { {#1}}}]}
\newcommand{\anote}[1]{[{\footnotesize \color{red}{\bf Andrej:} { {#1}}}]}
\newcommand{\rnote}[1]{[{\footnotesize \color{magenta}{\bf Rohit:} { {#1}}}]}
\newcommand{\close}[0]{[{\footnotesize \color{green}{\bf Resolved.} {}}]}
\newcommand{\todo}{\colorbox{yellow}{{\rm [...]}}}
\else
\newcommand{\note}[1]{}
\newcommand{\ynote}[1]{}
\newcommand{\pnote}[1]{}
\newcommand{\anote}[1]{}
\newcommand{\rnote}[1]{}
\newcommand{\close}[0]{}
\fi


\theoremstyle{definition}
\newtheorem{definition}    {Definition} [section]
\newtheorem{approach}      {Approach}
\newtheorem{reference}     {Reference}
\theoremstyle{plain}
\newtheorem{theorem}       {Theorem}    [section]
\newtheorem{remark}        {Remark}     [section]
\newtheorem{conjecture}    {Conjecture}
\newtheorem{question}      {Question}
\newtheorem{lemma}         [theorem]    {Lemma}
\newtheorem{proposition}   [theorem]    {Proposition}
\newtheorem{claim}         [theorem]    {Claim}
\newtheorem{corollary}     [theorem]    {Corollary}
\newtheorem{construction}  [theorem]    {Construction}
\newtheorem{fact}          [theorem]    {Fact}
\newtheorem{assumption}    [theorem]    {Assumption}
\newtheorem{inftheorem}    [theorem]    {Informal Theorem}
\newtheorem{notation}      [definition] {Notation}
\newtheorem{exercise}      {Exercise}
\newtheorem*{answer}       {Answer}

\newenvironment{proof-of}[1]{\begin{proof}[\textit{Proof of #1}]}{\end{proof}}
\newenvironment{proof-sketch}{\begin{proof}[\textit{Proof Sketch}]}{\end{proof}}
\newenvironment{proof-sketch-of}[1]{\begin{proof}[\textit{Proof Sketch of #1}]}{\end{proof}}



\newcommand{\bb}       {\mathbb}
\newcommand{\ve}[1]    {\langle #1 \rangle}
\newcommand{\samp}[1]  {\overset{#1}{\leftarrow}}
\newcommand{\set}[1]   {\left\{#1\right\}}
\newcommand{\abs}[1]   {\left|#1\right|}
\newcommand{\size}[1]  {\left|#1\right|}
\newcommand{\labs}     {\left |}
\newcommand{\rabs}     {\right |}
\newcommand{\la}       {\leftarrow}
\newcommand{\ra}       {\rightarrow}
\newcommand{\zo}       {\{0,1\}}
\newcommand{\bset}     {\{0,1\}}
\newcommand{\oset}     {\{-1,1\}}
\newcommand{\Nat}      {\mathbb{N}}
\newcommand{\Int}      {\mathbb{Z}}
\newcommand{\Real}     {\mathbb{R}}
\newcommand{\Complex}  {\mathbb{C}}
\newcommand{\GF}[1]    {\mathbb{F}_{#1}}
\newcommand{\bF}       {\mathbb{F}_2}
\newcommand{\half}     {\frac{1}{2}}
\newcommand{\ceil}[1]  {\lceil#1\rceil}
\newcommand{\floor}[1] {\lfloor#1\rfloor}
\newcommand{\eps}      {\varepsilon}

\renewcommand{\Pr}[1]  {\mathrm{Pr}\left[#1\right]}
\newcommand{\Prob}[2]  {\mathop{\mathrm{Pr}}_{#1}\left[#2\right]}
\newcommand{\Exp}[1]   {\mathbb{E}\left[#1\right]}
\newcommand{\Expec}[2] {\mathop{\mathbb{E}}_{#1}\left[#2\right]}
\newcommand{\Var}[1]   {\mathrm{Var}\left[#1\right]}
\newcommand{\Vari}[2]  {\mathrm{Var}_{#1}\left[#2\right]}
\newcommand{\sbracket}[1]{\left[#1\right]}
\newcommand{\rbracket}[1]{\left(#1\right)}
\newcommand{\inner}[2] {\left\langle #1, #2 \right\rangle}
\newcommand{\cond}[1]  {\left|#1\right.}

\newcommand{\ip}[2]    {\langle #1, #2 \rangle}
\newcommand{\norm}[1]  {\|#1\|}
\newcommand{\wt}[1]    {\mathsf{wt}\left(#1\right)}
\newcommand{\sd}       {\Delta}
\newcommand{\EA}       {\mathsf{EA}}
\newcommand{\ED}       {\mathsf{ED}}
\newcommand{\sharpP}   {\#\mathsf{P}}
\newcommand{\poly}     {\mathsf{poly}}
\newcommand{\pois}     {\mathsf{Poisson}}
\newcommand{\ber}      {\mathsf{Ber}}
\newcommand{\NCP}      {\mathsf{NCP}}
\newcommand{\BNCP}     {\mathsf{BNCP}}
\newcommand{\DNCP}     {\mathsf{DBNCP}}
\newcommand{\LPN}      {\mathsf{LPN}}
\newcommand{\dist}[1]  {\mathsf{dist}\rbracket{#1}}
\newcommand{\mindist}[1]    {\mathsf{mindist}\rbracket{#1}}
\newcommand{\maxdist}[1]    {\mathsf{maxdist}\rbracket{#1}}

\newcommand{\secp}     {\lambda}
\newcommand{\Adv}      {\cal{A}} 
\newcommand{\negl}     {\mathsf{negl}}
\newcommand{\bit}      {b}
\newcommand{\Ot}       {\widetilde{O}}
\newcommand{\Omegat}   {\widetilde{\Omega}}

\newcommand{\keygen}   {\mathsf{KeyGen}}
\newcommand{\enc}      {\mathsf{Enc}}
\newcommand{\dec}      {\mathsf{Dec}}
\newcommand{\adv}      {\mathsf{adv}}

\newcommand{\SUM}      {\mathsf{SUM}}
\newcommand{\bal}      {\mathsf{Bal}}
\newcommand{\cC}       {\mathcal{C}}
\newcommand{\DC}       {\mathcal{C}^\bot}
\newcommand{\HC}       {\underline{\mathcal{C}}^{\bot}}
\newcommand{\ZC}       {Z(\mathcal{C})}
\newcommand{\hre}      {\underline{h}}

\begin{abstract}
Prange's information set algorithm is a decoding algorithm for arbitrary linear codes.  It decodes corrupted codewords of any $\mathbb{F}_2$-linear code $C$ of message length $n$ up to relative error rate $O(\log n / n)$ in $\poly(n)$ time.  We show that the error rate can be improved to $O((\log n)^2 / n)$, provided: (1) the decoder has access to a polynomial-length advice string that depends on $C$ only, and (2) $C$ is $n^{-\Omega(1)}$-balanced.

As a consequence we improve the error tolerance in decoding random linear codes if inefficient preprocessing of the code is allowed.  This reveals potential vulnerabilities in cryptographic applications of Learning Noisy Parities with low noise rate.

Our main technical result is that the Hamming weight of $Hw$, where $H$ is a random sample of \emph{short dual} codewords, measures the proximity of a word $w$ to the code in the regime of interest.  Given such $H$ as advice, our algorithm corrects errors by locally minimizing this measure.  We show that for most codes, the error rate tolerated by our decoder is asymptotically optimal among all algorithms whose decision is based on thresholding $Hw$ for an arbitrary polynomial-size advice matrix $H$.
\end{abstract}

\tableofcontents
\newpage


\section{Introduction}

Decoding corrupted codewords is a challenging algorithmic task.  Its complexity is far from being understood.  The choice of the code is crucial in this context.  Crafted codes like Reed-Solomon can be decoded optimally, all the way up to half the minimum distance.  In contrast, for random linear codes no known algorithm can substantially outperform brute force.

What is the algorithmic complexity of decoding a generic code of given rate and distance?  This question is captured by the \emph{nearest codeword problem} (NCP).

NCP takes two inputs: A linear code $C$ of message length $n$ and blocklength $m \geq n$, and a target word $w$.  (We restrict attention to codes over the binary alphabet.)  In the \emph{decision} version, the goal is to distinguish whether $w$ is close to $C$ or far from $C$.  In the \emph{estimation} version, the goal is to calculate the distance from $w$ to $C$ up to some approximation factor.  The \emph{search} version asks for the codeword closest to $w$ under the promise that $w$ is sufficiently close to $C$. 

Of the three variants, the search one is the hardest.  Decoding up to relative distance $\eta$ yields an estimation algorithm with approximation ratio at most $1/\eta$.  It also distinguishes those words that are $\eta$-close to $C$ from all others, provided that $\eta$ is within the decoding radius.

\medskip
The randomized ``information set decoding'' algorithm of Prange~\cite{Prange62} is essentially still the best asymptotically efficient one for NCP.\footnote{It was rediscovered by Berman and Karpinski~\cite{BK02} as an approximation algorithm, and partially derandomized by Alon, Panigrahy, and Yekhanin~\cite{APY09}. There have also been multiple optimizations that improve on it~\cite{BJMM2012,MO15,BothMay2018}, but these ultimately rely on similar principles and have similar asymptotic behavior.}  It finds the codeword closest to $w$ in expected time $\exp(O(\eta n))$ under the promise that $w$ is within relative distance $\eta$ of $C$.  Setting $\eta$ to $O((\log m) / n)$ yields a polynomial-time algorithm with approximation ratio $n / c \log m$ for any constant $c$.  

No asymptotic improvement to Prange's algorithm is known even in the average-case, i.e., for random linear codes $C$.  In the vanishing rate regime $m \gg n$, which will be of our main interest, most such codes are \emph{$O(\sqrt{n/m})$-balanced}:  all codewords have Hamming weight in the range $1/2 \pm \Theta(\sqrt{n/m})$, which is close to the best possible~\cite{welch74, levenshtein83, alon03, MRRW77, FT05}. 

On the hardness side, there is no polynomial-time algorithm that approximates the distance to the nearest codeword within an arbitrary constant factor unless P equals NP, or within a $O(2^{\log^{1 - \epsilon} n})$ factor unless NP has quasi-polynomial-time algorithms \cite{ABSS93, BGR25}.  In addition, there is no sub-exponential-time approximation within \emph{some} constant factor under the Exponential-Time Hypothesis \cite{BHIRW24}. 

Does this hardness stem from the choice of the code $C$ or of the corrupted codeword $w$?  The nearest codeword problem \emph{with pre-processing} is useful for studying their relative contribution.  In this problem, there is a pre-processing phase, in which the algorithm sees only the code $C$ and produces a bounded advice string $H$ that depends on $C$. There is no restriction on the complexity of computing $H$.  In the online phase, the algorithm reads $w$ and produces its answer. The estimation variant of NCP with pre-processing is known to be NP-hard to approximate to within a fixed constant factor~\cite{FM04, Reg04}, and SETH-hard to solve exactly in $2^{(1-\epsilon)n}$ time~\cite{SV19}.

\subsection*{Our results}

Our main result (\Cref{the:sea_alg}) is that pre-processing improves upon Prange's algorithm for highly balanced codes.  Under the promise that $C$ is $\beta$-balanced, the Nearest Codeword at relative distance $\eta$ can be found in time $\poly(m) \cdot \exp [ O(\eta n / \log(1/\max\{\beta, \eta\})) ]$ using advice of comparable length. 

When the code is $n^{-\Omega(1)}$-balanced, we obtain a polynomial-time algorithm with preprocessing for decoding near-codewords at relative distance $(\log n)^2 / n$.  In contrast, Prange's algorithm would take quasipolynomial time to decode at this distance.

The balance assumption is satisfied by most linear codes of rate $n^{-\Omega(1)}$ (as $n$ grows), as well as by many explicit constructions~\cite{NaorNaor1993, AlonGoldreichHastadPeralta1992, AlonBruckNaorNaorRoth1992, BenAroyaTaShma, TaShma2017}.  Is it necessary?  In \Cref{the:imp_thr} we show that it is for a certain class of algorithms with preprocessing.

Specifically, the advice in \Cref{alg:decideNCP} that solves the decision version of NCP is a matrix $H$ that depends on the code. The algorithm decides proximity to the code based on the Hamming weight of $Hw$ (modulo 2).  This weight is small if $w$ is close to the code and large if it is far.

Our \Cref{cor:lower-random} shows that for most codes $C$, any algorithm whose decision is the threshold of $Hw$ for \emph{some} matrix $H$ cannot separate words that are $\omega((\log^2 n)/n)$-close to the code from those that are $(1/2 - 3\sqrt{n/m})$-far, provided $H$ has size polynomial in $n$.  Thus even under almost extremal assumptions on the balance of $C$, the decoding distance cannot be improved. 

\Cref{the:imp_thr} is restrictive in assuming that proximity to the code is decided by a threshold of parities.  In \Cref{the:int_dist} we show that the same conclusion extends to a polynomial threshold function of parities.  While further generalizations are an interesting open question, we cannot expect to fully eliminate the complexity assumption on the distinguisher in an unconditional lower bound.  For instance, the possibility that every code can be decoded by a polynomial-size \emph{threshold of thresholds} of parities is consistent with the current sorry state of computational complexity theory.

\subsection*{Ideas and techniques}

\paragraph{The algorithm} Our advice $H$ consists of independent samples from a carefully chosen distribution $D$ on length-$m$ binary strings, conditioned on each sample being a dual codeword in $C^\perp$.  We denote (the density of) this conditional distribution by $D_C$.

A sample of $D$ is a sum $h = u_1 + \cdots + u_\ell$ of independent standard basis (one-hot) vectors in $\mathbb{F}_2^m$ chosen with replacement.  When $\ell$ is even, the all-zero codeword is in the support of $D$ and the conditional distribution $D_C$ is well-defined.

The distribution $D_C$ favors light dual codewords, with the weight controlled by the parameter $\ell$.  When $\ell$ is slightly larger than the distance of $C^\perp$, we expect $D_C$ to be concentrated on dual codewords of weight close to the minimum distance of $C^\perp$.

In \Cref{lem:bound_exp_dec} we show that for a suitable choice of $\ell$ the Fourier transform 
\[ \hat{D}_C(w) = \Expec{h \sim D_C}{(-1)^{\langle\/h, w\rangle}}
= \Expec{h \sim D}{(-1)^{\langle\/h, w\rangle}\ |\ h \in C^\perp} \]
distinguishes close from far codewords:  $\hat{D}_C(w)$ is bounded away from zero when $w$ is close to the code and extremely close to zero when it is at relative distance $1/2 - n^{-\Omega(1)}$ from all codewords and their complements.  

Our Algorithm~\ref{alg:decideNCP} decides proximity based on an empirical estimate of $\hat{D}_C(w)$.  By a large deviation bound, the efficiently computable function
\[
\tilde{D}_H(w) = \Expec{\text{row $h$ of $H$}}{(-1)^{\langle\/h, w\rangle}}
\]
uniformly approximates $\hat{D}_C(w)$ for at least one choice (in fact, for most choices) of $H$.  This $H$ is the advice provided to the algorithm.

Our search algorithm (\Cref{alg:searchNCP}) is based on one additional idea.  For sufficiently light $w$, $\hat{D}_C(w)$ is monotonically decreasing.  Thus the $1$-entries in $w$ are precisely those whose flip results in an increase of $\hat{D}_C(w)$, or of its proxy $\tilde{D}_H(w)$.  As $\hat{D}_C(w)$ is periodic modulo $C$, the errors in $w$ can be corrected by flipping those bits $w_i$ that increase the value of $\tilde{D}_H(w)$.

\paragraph{Lower bound}
To prove \Cref{the:imp_thr} we show that for any sufficiently short advice $H$, the function $\tilde{D}_H$ cannot distinguish \emph{random} codewords $w = Cx + e$ corrupted by random noise $e$ exceeding the noise rate supported by our algorithm from truly random strings $w$ in expectation.  The former are typically close to the code, while the latter are very far.  Thus it is impossible for a threshold distinguisher to tell them apart.

As a consequence, our algorithms cannot be improved by 
optimizing the advice $H$, and in particular by a different choice of distribution $D$. 
The key property of $D$ used in our analysis is that the Fourier coefficients $\hat{D}(w)$ are noticeable when $w$ is light (its relative Hamming weight is $O((\log n)^2 / n)$)
and vanishing when $w$ is heavy (see~\eqref{eq:val_fourier}).  In words, $D$ is noticeably biased under all light linear tests but almost unbiased under all heavy linear tests.

\Cref{lem:imp_mmt} states that this is optimal:  If $D$ is noticeably biased under all light tests of relative Hamming weight $\omega((\log n)^2 / n)$ then it must be somewhat biased under a heavy test of weight $(1 \pm n^{-\Omega(1)})/2$.  This consequence may be of interest in pseudorandomness, specifically the study of small-biased sets.  \Cref{lem:imp_mmt} says that one cannot ignore the contribution of light tests, which are generally easy to fool.  

\Cref{lem:imp_mmt} is a consequence of \Cref{the:imp_thr} and our proof of \Cref{the:dec_alg}.  We find this argument somewhat unsatisfying as it detours into coding theory and algorithms to establish a statement about distributions.  In \Cref{sec:optdist} we provide an alternative direct analytical proof of it.

\subsection*{Parallels and tangents to decoding in lattices}

The closest vector problem for lattices is the analogue of the nearest codeword problem for codes.  Aharonov and Regev~\cite{AR05} gave an efficient algorithm with preprocessing that approximates the distance to any lattice of dimension $n$ up to a $\sqrt{n / \log n}$ approximation factor. Liu, Lyubashevsky, and Micciancio~\cite{LLM06} gave a decoding algorithm with preprocessing for lattices under the promise that the distance between the target vector $w$ and the lattice is within $O(\sqrt{\log n / n})$ times the length of its shortest vector.

There are strong parallels between their algorithms and ours.  In the lattice setting, the analogue of our distribution $D$ is the multivariate standard normal, $H$ is a matrix of samples from $D$ conditioned on membership in the dual lattice, and the proximity decision is made based on the norm of the vector $Hw$.

One important difference is that the algorithm of Aharonov and Regev works without any assumption on the length of the dual shortest vector, which is the analogue of dual distance for lattices.  As we show in \Cref{the:imp_thr}, a bounded  dual distance assumption is necessary in the code setting.\footnote{This still leaves open whether the balance assumption used in our algorithms can be replaced by a weaker one like bounded dual distance.}  For codes, the length of the advice must be exponential in the product of the relative proximity parameter $\eta$ and the dual distance $d$. 

Interestingly, Aharonov and Regev's $\sqrt{n/\log n}$ approximation ratio for the closest vector problem with preprocessing is matched by Goldreich and Goldwasser's statistical zero-knowledge proof \cite{GG00} for the same problem.  Analogously, our $n/(\log n)^2$ approximation ratio  for the nearest codeword problem for balanced codes with preprocessing is matched by the statistical zero-knowledge proof of Brakerski et al.~\cite{BLVW19} for the same problem.  As in the lattice setting, the two algorithms are quite different. (At a technical level, the sampling of short combinations of codewords in the code and its dual, respectively, and the Fourier analysis of this process is a commonality.)

One intriguing open question is whether Aharonov and Regev's $\sqrt{n}$-approximate noninteractive (coNP) refutation for closest vector in lattices has an analogue for balanced codes.

\subsection*{Implications on learning noisy parities and cryptography}

Learning Noisy Parities (LPN)~\cite{BFKL93} is an average-case variant of NCP.  The code $C$ is random, and the decision problem is to distinguish a randomly corrupted random codeword from a uniformly random string.  The hardness of LPN has found several uses in cryptography, including public-key encryption~\cite{Ale03}, collision-resistant hashing~\cite{AHIKV17, BLVW19, YZWGL19}, and more~\cite{BLSV18,BF22,AMR25}. LPN-based schemes are attractive for their computational simplicity and plausible post-quantum security.

Some of these constructions~\cite{BLVW19, BLSV18, BF22, YZWGL19, AMR25} assume security of LPN at noise rate $O((\log n)^2 / n)$.  Our \Cref{alg:searchNCP} does not render them insecure owing to its inefficient preprocessing phase.  However, it highlights potential concerns in settings where the code may be used as a public parameter, like a hash key as in~\cite{AMR25} or a common random string as in~\cite{BLSV18}.  Such schemes may invoke security of some LPN instances where the the code $C$ is available \emph{long term}, and not generated ephemerally by running some algorithm in the scheme. This would allow an adversary to mount a longer duration attack to calculate the advice $H$ and then apply \Cref{alg:searchNCP} to break the scheme.  Our results also rule out non-uniform security:  If the adversary is only bounded in size but may otherwise depend arbitrarily on $C$ then noise of rate $O((\log n)^2 / n)$ is insecure. 

This non-uniform security model is in particular captured by the \emph{linear test framework} of Boyle et al.~\cite{BCGIKS20, CRR21, BCGIKRS22}.  They quantify the insecurity of a code $C$ by the maximum bias
\[ \max_{h \in \mathbb{F}_2^m} \Exp{(-1)^{\langle h, w\rangle}}, \]
where $w$ is a randomly corrupted random codeword of $C$. Couteau et al.~\cite[Lemma~6]{CRR21} observe that the insecurity becomes negligible when the dual distance of $C$ is sufficiently large.  The same logic underlies the proof of our lower bound \Cref{the:imp_thr}.

\subsection*{Other related work}

\paragraph{Information set decoding algorithms}  Prange's algorithm has been applied to cryptanalyze code-based schemes.  There are several concrete improvements in the high-distance regime.  None of them are asymptotic improvements in the exponent.  Specifically, the expected running time of Prange's algorithm on worst-case inputs (close to half the minimum distance) is asymptotically dominated by $2^{cn}$ with $c \approx 0.058$.  In applications, decoding at lower distances is reduced to decoding near half the minimum distance on a code of lower dimension.  

There has been considerable effort in improving this constant~\cite{MMT2011,BJMM2012,MO15,BM17}.  Both and May~\cite{BothMay2018} obtain $c \approx 0.047$.  Bernstein~\cite{Bernstein10} obtains a quadratic \emph{quantum} speedup of Prange's algorithm, and Kachigar and Tillich~\cite{KachigarTillich2017} are able to get a quantum speedup for the approaches in~\cite{MMT2011,BJMM2012}.  Ducas, Esser, Etinski and Kirshanova~\cite{DucasEsserEtinskiKirshanova2024} provide further lower-order improvements motivated by cryptographically relevant concrete parameter settings building on the approach of~\cite{MMT2011}.


\paragraph{Worst-case to average-case reductions for balanced NCP}  The work of~\cite{BLVW19} also gives a reduction from the hardness of NCP for balanced codes to that of LPN, albeit with extreme parameters. Yu and Zhang~\cite{YZ21} somewhat improve and generalize this result to another restricted kind of codes they call {\em independent} (balanced) codes. Debris-Alazard and Resch~\cite{DAR25}
give a reduction from worst case NCP on balanced codes (with inverse polynomial noise rate) to average case NCP (with noise rate inverse polynomially close to half).

\section{Concepts and notation}

\subsection{Notations}



\paragraph{Linear codes.}
A linear code $\cC$ is a collection of vectors $\set{C\cdot x: x\in \bF^n}$ with $C$ being the generator matrix in $\bF^{m \times n}$. By default, we assume $C$ has full rank and $m>n$. Consider any linear code $\cC$ with a generator matrix $C$, there exists a parity-check matrix $C^{\bot}\in \bF^{m\times (m-n)}$, such that $(C^\bot)^T C = 0^{(m-n)\times n}$. $C^\bot$ also generates the dual code of $\cC$, denoted by $\cC^\bot = \set{h\in \bF^m : \inner{h}{v} = 0, \forall v\in\cC }$.  


\paragraph{Weight and distance.}
We denote the Hamming weight of a vector $v \in \bF^m$ by $\wt{v}$.
For any code $\cC \subseteq \bF^m$ and vector $w\in \bF^m$, let $\dist{\cC, w} = \min_{v \in \cC} \wt{v + w}$, indicate the minimum distance of $w$ to a code $\cC$. With these notations, we are ready to define the notion of \emph{balanced codeword} and to characterize the distance of a vector from a linear code.


\begin{definition}[Balanced codeword]
    For $\beta\in[0,1]$, a length-$m$ vector $w$ is called \emph{$\beta$-balanced} if $\wt{w} \in \frac{1}{2}(1\pm \beta)m$. Correspondingly, a code $\cC$ is said to be \emph{$\beta$-balanced} if every non-zero codeword is $\beta$-balanced. 
\end{definition}
 
\begin{definition}
    For $\eta,\beta\in[0,1]$, $w$ is called \emph{$\eta$-close} to a code $\cC$ if $\dist{\mathcal{C},w} \le \eta m$; $w$ is called \emph{$\beta$-separated} from the code $\cC$, if $(w+v)$ is $\beta$-balanced for every $v\in \cC$. 
\end{definition}

\paragraph{Distributions.}  We identify discrete distributions $D$ with their probability mass functions, i.e., $D(x)$ is the probability that $x$ is the outcome when sampling from $D$.


\subsection{Nearest Codeword Problem}



The search version of \emph{nearest codeword problem} is defined as follows:

\begin{definition}[Nearest Codeword Problem]
    For $m, n\in \Nat$, $0<\eta< 1$ with $C\in \bF^{m\times n}, w\in \bF^{m}$, given input $(C,w)$ with the promise that $\dist{\mathcal{C}, w} < \eta m$, the \emph{(search) nearest codeword problem} $\NCP_{\eta}$ is to find $s$, such that $s \in \arg \min_{x\in \bF^n} \dist{C\cdot x, w}$.
\end{definition}

Beyond this general definition, we introduce a variant, called \emph{balanced nearest codeword problem}, in which the code is restricted to meet the balance property. 

\begin{definition}[Balanced Nearest Codeword Problem]
    For $m, n\in \Nat$, $0< \beta, \eta< 1/6$, consider $C\in \bF^{m\times n}, w\in \bF^{m}$, given input $(C,w)$ with the promise that $\cC$ is a $\beta$-balanced code and $\dist{\mathcal{C}, w}< \eta m$, the \emph{(search) balanced nearest codeword problem} $\BNCP_{\beta, \eta }$ is to find $\arg \min_{x\in \bF^n} \dist{C\cdot x, w}$.
\end{definition}

We note that when $\beta, \eta < 1/6$, with the promise that $w$ is $\eta$-close to a $\beta$-balanced code $\cC$, there must be a unique close codeword.
The \emph{decisional balanced nearest codeword problem} is given below.  For simplicity, we use a single parameter $\beta$ to capture both the balance of the code and the separation of the NO instances.

\begin{definition}[Decisional Balanced Nearest Codeword Problem]
    For $m, n \in \Nat$, $0< \beta,\eta< 1/6$, consider $C\in \bF^{m \times n}, w\in \bF^{m}$, given input $(C,w)$ with the promise that $\mathcal{C}$ is a $\beta$-balanced code, the \emph{decisional balanced nearest codeword problem} $\DNCP_{\beta, \eta}$ is to decide between the following two cases:
    \begin{align*}
        \mathsf{YES} &= \set{(C,w): w \text{ is } \eta\text{-close to the code } \cC};  \\
        \mathsf{NO} &= \set{(C,w): w \text{ is } \beta\text{-separated from the code } \cC}.
    \end{align*}
\end{definition}

\subsection{Random Linear Codes}

A random linear code is a code specified by a uniformly random choice of generator matrix $C \in \mathbb{F}_2^{m \times n}$.  By the Gilbert-Varshamov bound, a random linear code is $3\sqrt{n/m}$-balanced except with probability $2^{-\Omega(n)}$.

We reproduce this argument.  As each nonzero codeword of a random linear code is random, by Hoeffding's inequality it is $\beta$-balanced except with probability $2\exp(-\beta^2 m / 2)$.  By a union bound, the probability that there exists an unbalanced codeword is then at most 
\begin{equation}
\label{eq:bd_code}
(2^n-1) \cdot 2e^{-\beta^2 m/2} < 2^{n - \beta^2 m/2 + 1}. 
\end{equation}
which is $2^{-\Omega(n)}$ when $\beta = 3\sqrt{n/m}$.




\section{Algorithm with Preprocessing}
Our main result is \Cref{the:sea_alg}, the search algorithm with preprocessing for the balanced codeword problem.  As the algorithm for the decisional problem $\DNCP$ is easier to describe and contains the main idea, we describe it first and prove its correctness in \Cref{the:dec_alg}.  

Both of these algorithms involve preprocessing. An algorithm with preprocessing for $\NCP$ is one that, in addition to the instance $(C,w)$, is also given an advice string that is a function of $C$ (but not of $w$). This advice is not required to be efficiently computable. 

In the following statements $n$ and $m$ are the message length and blocklength of the code, respectively.

\begin{restatable}{proposition}{decAlg}
\label{the:dec_alg} Assuming $0< \beta,\eta < 1/6$, there is an algorithm with preprocessing for $\DNCP_{\beta,\eta}$ with both advice size and running time $m^2 \cdot \exp[{O(\eta n /\log(1/\beta))}]$.
\end{restatable}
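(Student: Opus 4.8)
The plan is to instantiate the distribution $D$ described in the introduction, understand its Fourier transform after conditioning on the dual code, and then turn the resulting bias gap into a threshold test using a Chernoff bound and a union bound. Concretely, I would fix an even integer $\ell$ (to be chosen as $\Theta(n/\log(1/\beta))$), take $D$ to be the law of $u_1+\cdots+u_\ell$ for i.i.d.\ uniform one-hot vectors $u_1,\dots,u_\ell\in\bF^m$, and let $D_C$ be $D$ conditioned on the outcome lying in $\cC^\perp$ — a conditioning on a positive-probability event, since $0\in\cC^\perp$ is in the support of $D$ when $\ell$ is even. The advice $H$ is a matrix whose $k$ rows are i.i.d.\ samples from $D_C$; since $\Prob{h\sim D}{h\in\cC^\perp}\ge 2^{-n}$, these can be produced by rejection sampling in $\poly(m)\cdot 2^n$ time, which the unbounded preprocessing phase is free to spend. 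On input $(C,w)$, the online algorithm computes $\tilde D_H(w)=\frac1k\sum_i(-1)^{\ip{h_i}{w}}$ and answers ``$w$ is close'' iff $\tilde D_H(w)>T$ for the threshold $T:=\tfrac12(1-2\eta)^\ell$.

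The first real step is the Fourier computation. By independence of the $u_i$, $\hat D(w)=\bigl(\Expec{u}{(-1)^{\ip u w}}\bigr)^\ell=(1-2\wt w/m)^\ell$; note $\hat D(w)\ge 0$ for every $w$ because $\ell$ is even, and $|\hat D(w)|\le\beta^\ell$ whenever $w$ is $\beta$-balanced. Writing $\mathbf 1[h\in\cC^\perp]=\Expec{v\sim\cC}{(-1)^{\ip h v}}$ (the character $v\mapsto(-1)^{\ip h v}$ is trivial on $\cC$ precisely when $h\in\cC^\perp$, and averages to $0$ otherwise) and swapping the two expectations gives
\[
\hat D_C(w) \;=\; \frac{\sum_{v\in\cC}\hat D(w+v)}{\sum_{v\in\cC}\hat D(v)}.
\]
From this I read off the two cases. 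If $w=v^\ast+e$ is a YES instance with $\wt e\le\eta m<m/2$, then the $v=v^\ast$ term of the numerator is $(1-2\wt e/m)^\ell\ge(1-2\eta)^\ell$ and every other term is $\ge 0$, while the denominator equals $1+\sum_{v\neq 0}\hat D(v)\in[1,\,1+2^n\beta^\ell]$ by the balance of $\cC$; hence $\hat D_C(w)\ge(1-2\eta)^\ell/(1+2^n\beta^\ell)$. If $w$ is a NO instance, every $w+v$ is $\beta$-balanced, so the numerator has absolute value at most $2^n\beta^\ell$ and the denominator is at least $1$, giving $|\hat D_C(w)|\le 2^n\beta^\ell$.

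Next I would choose $\ell$ to separate the two ranges and then sample enough rows. Since $0<\eta<1/6$, we have $1-2\eta>2/3$, so $\log_2((1-2\eta)/\beta)=\Theta(\log(1/\beta))$; thus $\ell:=2\lceil(n+2)/\log_2((1-2\eta)/\beta)\rceil=\Theta(n/\log(1/\beta))$ already forces $2^n\beta^\ell\le\tfrac14(1-2\eta)^\ell$. With this $\ell$, a YES instance has $\hat D_C(w)\ge\tfrac45(1-2\eta)^\ell>T$ and a NO instance has $|\hat D_C(w)|\le\tfrac14(1-2\eta)^\ell<T$, with a margin of at least $\tfrac14(1-2\eta)^\ell$ on either side of $T$. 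Taking $t:=\tfrac18(1-2\eta)^\ell$, Hoeffding's inequality gives $\Prob{H}{|\tilde D_H(w)-\hat D_C(w)|>t}\le 2e^{-2kt^2}$ for each fixed $w$, and a union bound over the at most $2^m$ targets shows that for $k=O(m/t^2)=O\bigl(m(1-2\eta)^{-2\ell}\bigr)$, with probability greater than $1/2$ over the choice of $H$, $\tilde D_H$ lies within $t$ of $\hat D_C$ at every $w$; in particular some such $H$ exists, and fixing it as the advice makes the threshold test correct on every promise instance.

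The remaining work is bookkeeping on the parameters: $(1-2\eta)^{-2\ell}=\exp(2\ell\ln\tfrac1{1-2\eta})\le\exp(6\eta\ell)=\exp(O(\eta n/\log(1/\beta)))$, using $\ln\tfrac1{1-2\eta}\le 3\eta$ for $\eta<1/6$ together with $\ell=O(n/\log(1/\beta))$; hence $k=m\cdot\exp(O(\eta n/\log(1/\beta)))$, the advice (stored as $k$ length-$m$ vectors) has size $km=m^2\cdot\exp(O(\eta n/\log(1/\beta)))$, and the online cost — $k$ inner products over $\bF^m$ plus one comparison — is $O(km)=m^2\cdot\exp(O(\eta n/\log(1/\beta)))$, as claimed. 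I expect the delicate point to be exactly this parameter trade-off: the YES bound must cost nothing beyond estimating the denominator, and that is precisely what choosing $\ell$ \emph{even} buys, since it makes all numerator terms nonnegative. Without this observation, bounding the off-diagonal numerator terms individually would only give $|\hat D(w+v)|\le(\beta+2\eta)^\ell$, forcing $\ell=\Theta(n/\log(1/\max\{\beta,\eta\}))$, which is larger than $\Theta(n/\log(1/\beta))$ when $\eta>\beta$ and yields a worse running-time exponent.
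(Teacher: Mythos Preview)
Your proposal is correct and follows essentially the same approach as the paper: the same distribution $D$, the same Fourier identity $\hat D_C(w)=\sum_{v\in\cC}\hat D(w+v)/\sum_{v\in\cC}\hat D(v)$, the same YES/NO separation via balancedness and nonnegativity of $\hat D$ for even $\ell$, and the same Hoeffding-plus-union-bound step to fix good advice. The only differences are cosmetic---your $\ell=2\lceil(n+2)/\log_2((1-2\eta)/\beta)\rceil$ versus the paper's $\ell=2\lceil n/\log(1/\beta)\rceil$, and correspondingly different thresholds and margins---all of which wash out in the $O(\cdot)$ bounds; your closing remark about why even $\ell$ (and hence nonnegativity) is what buys $\log(1/\beta)$ rather than $\log(1/(\beta+2\eta))$ in the exponent is exactly the point the paper exploits here and loses in the search algorithm.
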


\begin{restatable}{theorem}{seaAlg}
\label{the:sea_alg}
    Assuming $1/m\le \beta, \eta< 1/8$, there is an  algorithm with preprocessing for $\BNCP_{\beta,\eta}$
    with both advice size and running time $(m^4\log^2(1/\alpha)/n^2) \cdot \exp[{O(\eta n/\log(1/\alpha))}]$, where $\alpha = \beta + 2\eta$. 
\end{restatable}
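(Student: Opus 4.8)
The plan is to realize quantitatively the greedy local-correction strategy outlined in the introduction. Write the target as $w = C x^\ast + e$, where the promise $\beta,\eta<1/8$ guarantees that the coset $w + \cC$ has a unique representative $e$ with $\wt e < \eta m$; recovering $e$ yields $x^\ast$ by Gaussian elimination, so it suffices to recover $e$. A draw from $D$ is $h = u_1 + \dots + u_\ell$ with the $u_j$ independent and uniform over the $m$ standard unit vectors, so $\hat D(z) = (1 - 2\wt z/m)^\ell$. Applying the $\mathbb{F}_2$ Poisson summation identity $\sum_{h\in C^\perp} g(h) = 2^{-n}\sum_{v\in\cC}\hat g(v)$ to the conditional density $D_C$ gives the closed form
\[
\hat D_C(w) \;=\; \frac{\sum_{v\in\cC}\bigl(1 - 2\wt{v+e}/m\bigr)^\ell}{\sum_{v\in\cC}\bigl(1 - 2\wt v/m\bigr)^\ell},
\]
which in particular depends on $w$ only through $e$ (equivalently, $\hat D_C$ is periodic modulo $\cC$).

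The first step is to choose $\ell = \Theta(n/\log(1/\alpha))$ with $\alpha = \beta + 2\eta$ large enough that the $v = 0$ term dominates both sums: since $\cC$ is $\beta$-balanced and $\wt e \le \eta m$, every nonzero $v \in \cC$ has $|1 - 2\wt{v+e}/m| \le \beta + 2\eta = \alpha$, so the numerator is $(1-2\wt e/m)^\ell \pm 2^n \alpha^\ell$ and the denominator is $1 \pm 2^n\beta^\ell$, whence $|\hat D_C(w) - (1-2\wt e/m)^\ell| \le 2^{n+1}\alpha^\ell$ --- a quantitative version of \Cref{lem:bound_exp_dec}. The second step turns this into a \emph{local monotonicity with a gap}. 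Flipping bit $i$ of $w$ turns the coset representative into $e + \mathbf 1_i$, i.e.\ replaces $a := 1 - 2\wt e/m \in [1-2\eta,1]$ by $a + 2/m$ if $e_i = 1$ and by $a - 2/m \ge 1 - 4\eta > 0$ if $e_i = 0$. As $t \mapsto t^\ell$ is increasing with derivative $\ell t^{\ell-1}$, the mean value theorem gives $|(a \pm 2/m)^\ell - a^\ell| \ge G$ for $G := (2\ell/m)(1-4\eta)^{\ell-1} = \Theta\!\bigl((\ell/m)\exp(-\Theta(\eta n/\log(1/\alpha)))\bigr)$, with sign positive exactly when $e_i = 1$. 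Choosing $\ell$ so that moreover $2^{n+1}\alpha^\ell \le G/8$, we conclude that for every $\eta$-close $w$ and every $i \in [m]$, $\hat D_C(w^{(i)}) - \hat D_C(w)$ has sign $(-1)^{1+e_i}$ and magnitude at least $3G/4$, where $w^{(i)}$ denotes $w$ with bit $i$ flipped (its coset representative has weight $\wt e \pm 1 \le (\eta+1/m)m$, so the approximation bound still applies, with $\alpha$ replaced by $\beta + 2(\eta+1/m) = \Theta(\alpha)$).

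The advice is then $H$, consisting of $N = \Theta(m/G^2)$ rows sampled independently from $D_C$. By Hoeffding's inequality and a union bound over the at most $2^{O(m)}$ words within relative distance $\eta + 1/m$ of $\cC$ (using $m \ge n$), with probability $> 1/2$ over $H$ the estimator $\tilde D_H$ approximates $\hat D_C$ to within $G/8$ simultaneously at $w$ and at every $w^{(i)}$, for \emph{every} possible $\eta$-close instance $w$; fix such an $H$. Given $(C,w)$, the algorithm computes $\tilde D_H(w)$ and all $\tilde D_H(w^{(i)})$ in time $O(Nm)$ using $\inner{h}{w^{(i)}} = \inner h w \oplus h_i$, forms $S = \{ i : \tilde D_H(w^{(i)}) > \tilde D_H(w)\}$, and returns the solution $x$ of $Cx = w + \sum_{i\in S}\mathbf 1_i$ from Gaussian elimination. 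Two estimation errors of size $\le G/8$ cannot overturn a true gap of magnitude $\ge 3G/4$, so $S = \{ i : e_i = 1\}$, hence $\sum_{i\in S}\mathbf 1_i = e$ and $x = x^\ast$. The advice has size $N \cdot O(m) = O(m^2/G^2)$; substituting $G$ and $\ell = \Theta(n/\log(1/\alpha))$ gives $O\!\bigl((m^4\log^2(1/\alpha)/n^2)\cdot\exp[O(\eta n/\log(1/\alpha))]\bigr)$, and the running time $O(Nm) + \poly(m)$ is of the same order, while preprocessing (rejection sampling $D_C$ from $D$, then checking the union-bound event) is unbounded as permitted.

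I expect the crux to be the gap analysis of the second step and its cost in the third. Unlike the decisional algorithm of \Cref{the:dec_alg}, which only needs to tell whether $\hat D_C(w)$ is of order $\exp(-\Theta(\eta n/\log(1/\alpha)))$ or essentially zero, the search algorithm must resolve the \emph{difference} $\hat D_C(w^{(i)}) - \hat D_C(w)$, which is smaller than $\hat D_C(w)$ itself by a factor $\approx \ell/m$; this forces the empirical averages to be accurate to within $G \approx (\ell/m)\,\hat D_C(w)$, inflating the sample count --- hence advice length and running time --- by $(m/\ell)^2 = \Theta(m^2\log^2(1/\alpha)/n^2)$ relative to \Cref{the:dec_alg}. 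A related subtlety is that $\ell$ must be taken large enough that the codeword-sum tail $2^{n+1}\alpha^\ell$ stays below this smaller gap; here the governing parameter is $\alpha = \beta + 2\eta$ rather than $\beta$, because the words $v + e$ for nonzero $v$ are only guaranteed $\alpha$-balanced, which is why $\alpha$ replaces $\beta$ throughout the bound.
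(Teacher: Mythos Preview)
Your proposal is correct and follows essentially the same approach as the paper: the same advice (i.i.d.\ samples from $D_C$ with $\ell=\Theta(n/\log(1/\alpha))$), the same bit-flip test comparing $\tilde D_H(w^{(i)})$ to $\tilde D_H(w)$, the same gap estimate $G\asymp(\ell/m)(1-4\eta)^\ell$ obtained from the monotonicity of $t\mapsto t^\ell$ together with the bound $|1-2\wt{v+e}/m|\le\alpha$ for nonzero $v$, and the same Hoeffding-plus-union-bound yielding $N=\Theta(m/G^2)$. Your closing discussion of why the extra $(m/\ell)^2$ factor and the parameter $\alpha=\beta+2\eta$ (rather than $\beta$) arise matches the paper's reasoning exactly.
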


In particular, when $\beta = n^{-\Omega(1)}$ and $\eta = O(\log^2n/n)$, \Cref{alg:searchNCP} (which proves \cref{the:sea_alg}) runs in polynomial time.

\begin{corollary}
    When $\beta = n^{-\Omega(1)}$, $\eta = O(\log^2 n/n)$, there is an algorithm with preprocessing for $\BNCP_{\beta, \eta}$ with advice size and running time polynomial in $n$.
\end{corollary}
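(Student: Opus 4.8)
The plan is to obtain the corollary as a direct instantiation of \Cref{the:sea_alg}, simply substituting the stated parameter regime and verifying that the advice-size/running-time bound $(m^4\log^2(1/\alpha)/n^2)\cdot\exp[O(\eta n/\log(1/\alpha))]$, with $\alpha=\beta+2\eta$, collapses to a polynomial in $n$. Write $\beta\le n^{-\delta}$ for a fixed constant $\delta>0$ (this is the content of $\beta=n^{-\Omega(1)}$) and $\eta\le c\,(\log n)^2/n$ for a fixed constant $c>0$; since $\BNCP_{\beta,\eta}$ only becomes easier as $\eta$ shrinks, I would fix $\eta=c\,(\log n)^2/n$. The algorithm whose existence we assert is exactly \Cref{alg:searchNCP}, run with these parameters and the advice it prescribes.

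First I would check that the hypotheses $1/m\le\beta,\eta<1/8$ of \Cref{the:sea_alg} hold for all sufficiently large $n$: both $\beta$ and $\eta$ tend to $0$, so the upper bound is eventually satisfied; and since $m\ge n$ we have $1/m\le 1/n\le\eta$ (as $\eta=c(\log n)^2/n\ge 1/n$ for large $n$) and, after replacing $\beta$ by $\max\{\beta,1/m\}$ if necessary — which leaves the asymptotics unchanged and only weakens the balance promise — also $1/m\le\beta$. Next I would pin down the size of $\log(1/\alpha)$, which is the only quantity that really matters. On the one hand $\alpha\ge 2\eta\ge n^{-2}$, so $\log(1/\alpha)=O(\log n)$; on the other hand $\alpha=\beta+2\eta\le n^{-\delta}+2c(\log n)^2/n\le n^{-\delta'}$ for $\delta'=\min\{\delta,1/2\}$ and $n$ large (using $(\log n)^2/n=o(n^{-1/2})$), so $\log(1/\alpha)\ge\delta'\log n=\Omega(\log n)$. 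Hence $\log(1/\alpha)=\Theta(\log n)$.

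With $\log(1/\alpha)=\Theta(\log n)$ in hand, the exponent is $\eta n/\log(1/\alpha)=\Theta\!\big((\log n)^2/\log n\big)=O(\log n)$, so $\exp[O(\eta n/\log(1/\alpha))]=n^{O(1)}$; the prefactor $\log^2(1/\alpha)=O((\log n)^2)$ contributes only a polylogarithmic factor; and $m^4/n^2$ is polynomial in the blocklength, hence in the input length $|C|=mn$, hence in $n$ under the standard convention that $m=\poly(n)$. Multiplying, both the advice length and the running time of \Cref{alg:searchNCP} are $\poly(n)$, which proves the corollary. The only place any real thought is needed is the estimate $\log(1/\alpha)=\Theta(\log n)$ of the previous paragraph: it is precisely here that the polynomial balance assumption $\beta=n^{-\Omega(1)}$ enters, and it is exactly what distinguishes the polynomial running time claimed here from the quasi-polynomial bound one would get for, say, $\beta=1/(\log n)^{\Omega(1)}$, mirroring the comparison with Prange's algorithm noted in the introduction.
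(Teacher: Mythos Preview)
Your proposal is correct and follows exactly the route the paper intends: the paper does not give a separate proof of this corollary but simply remarks that \Cref{alg:searchNCP} runs in polynomial time for these parameters, so instantiating \Cref{the:sea_alg} is the whole argument. Your write-up in fact supplies details the paper leaves implicit, most notably the verification that $\log(1/\alpha)=\Theta(\log n)$ (which is the crux) and the observation that $m=\poly(n)$ is being tacitly assumed so that the $m^4$ prefactor is harmless.
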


Since the decisional variant $\DNCP$ reduces to $\BNCP$, \cref{the:sea_alg} is effectively more general than \cref{the:dec_alg}.  There is a gap in complexities in the regime $\eta \gg \beta$.  This gap owes to our usage of $\beta$ to represent both the balance and separation parameters in $\DNCP$.  

In \cref{sec:s2d} we present an alternative proof of \Cref{the:sea_alg} (with slightly worse advice size) by reduction to \Cref{the:dec_alg}.  While the resulting algorithm is less natural, this argument explains the deterioration from $\log 1/\beta$ in 
\Cref{the:dec_alg} to $\log 1/(\beta + 2\eta)$ in \Cref{the:sea_alg}.

\subsection{Decision Algorithm}
\label{sec:decision}


In this section, we present the decision algorithm and prove \cref{the:dec_alg}. The algorithm is as follows, where $\mathsf{Pre}$ is the preprocessing stage, and $\mathsf{Decide}$ is the algorithm itself. The algorithms are defined by parameters that we will set later in this section, based on the values of $n$, $m$, $\beta$, and $\eta$.



\begin{algorithm}[H]
    \caption{Preprocessing with parameters $(\ell, N)$}
    \label{alg:pre}
    \KwIn{Generator matrix $C\in \bF^{m \times n}$}
    \KwOut{Advice $(h_1,\dots, h_N)$}
    \SetKwProg{Samp}{Sampling}{:}{}
    \SetKwProg{Pre}{Preprocessing}{:}{}

    \Samp{$\mathsf{Samp}(C)$}{
        Sample length-$m$ unit vectors $u_1,\dots, u_\ell$ conditioned on $u_1+\dots+ u_\ell \in \cC^{\bot}$ independently randomly\;
        \Return ${u_1+\dots + u_\ell}$\;
    }

    \Pre{$\mathsf{Pre}(C)$}{
        \For{$i \in \set{1,\dots, N}$}{
            Sample $h_i \gets \mathsf{Samp}(C)$\;
        }
        \Return $h_1, \dots, h_N$\;
    }
\end{algorithm}



\paragraph{Two Important Distributions}
Fix any $n$, $m$, and $\beta$ as in the theorem statement, and any $\beta$-balanced code $C \in \bF^{m\times n}$. 
Set the parameter $\ell\in\Nat$ to some fixed value (to be determined later).  We define two distributions sampled as follows over the codespace, which will be crucial to the working of our algorithm:
\begin{itemize}
    \item Distribution $D$: sample length-$m$ unit vectors $u_1,\dots, u_\ell$ uniformly randomly, then output ${u_1 + \cdots + u_\ell}$.
    \item Distribution $D_C$: sample $h \gets D$ conditioned on $h \in \mathcal{C}^\bot$. 
\end{itemize}

We will set $\ell$ to be an even integer (among other restrictions), so that the distribution $D_C$ is well-defined. This is guaranteed by the fact that the zero vector $0^m$ lies in the dual space of any code and occurs with non-zero probability under the distribution~$D$ (so $D_C$ has nonempty support).

In the {\em preprocessing} stage {\sf Pre}, the algorithm samples vectors $(h_1, \dots, h_N)$ from distribution $D_C$, for $N\in \Nat$, which we refer to as advice. We note that since the preprocessing stage is allowed to be inefficient, it suffices to show the existence of good advice applicable to all inputs $w$ (which implies that it can be found inefficiently). We present a uniform method for obtaining such advice, that succeeds with high probability, which more than meets this bar. 


The {\em decision algorithm} {\sf Decide} performs as follows: it takes the vector $w$ and the advice $(h_1,\dots, h_N)$ associated with the code $\cC$ as input, and compares the value $\sum_{i} (-1)^{\inner{h_i}{w}}$ with a hard-coded threshold $t$ to decide whether the vector is close to or separated from the code $\cC$.  


\vspace{\baselineskip}

\begin{algorithm}[H]
    \caption{Decision algorithm with parameters $(N,t)$}
    \label{alg:decideNCP}
    \SetKwProg{Main}{Algorithm}{:}{}

    \KwIn{Vector $w\in \bF^m$ and advice $(h_1,\dots, h_N)$}
    \KwOut{$\mathsf{YES}$ or $\mathsf{NO}$, indicating the vector $w$ is closed to or separated from the code $\cC$}
    \Main{$\mathsf{Decide}(w;h_1,\dots,h_N)$ 
    }{
        \If{$ \sum_{i}(-1)^{\inner{h_i}{w}} > t\cdot N$}{
            \Return $\mathsf{YES}$\;
        }
        \Return $\mathsf{NO}$\;
    }
\end{algorithm}

\vspace{\baselineskip}

To prove correctness we analyze the Fourier transform $\hat{D}_C(w) = \E (-1)^{\langle h, w\rangle}$ and show that it is a good measure of distance to the code (\Cref{lem:bound_exp_dec}).

\paragraph{Fourier coefficient of $D_C$.}
We describe the probability mass function of $D_C$ in terms of that of $D$, which is 
\begin{equation*}
    D_C(h) = \frac{\mathbf{1}\left[h \in \DC \right] }{Z_C} \cdot D(h), 
\end{equation*}
where $Z_C$ is a normalization factor that satisfies
\begin{equation}
\label{eq:norm_zc}
    Z_C = \sum_{h \in \DC } D(h). 
\end{equation}
The Fourier coefficient of $D$ is defined by
\begin{equation*}
    \hat{D}(w) = \Expec{h \gets D}{(-1)^{\inner{h}{w}}}  = \sum_{h\in \bF^m} D(h) \cdot (-1)^{\inner{h}{w}}. 
\end{equation*}
For the distribution $D_C$, its Fourier coefficient $\hat{D}_C$ can be represented by $\hat{D}$, 
\begin{align}
\label{eq:fourier_expand}
    \hat{D}_C(w)  &= \Expec{h \gets D_C}{(-1)^{\inner{h}{w}}} \notag\\
    &= \sum_{h\in \bF^{m}} D_C(h) \cdot (-1)^{\inner{h}{w}} \notag \\
    &= \frac{1}{Z_C} \sum_{h \in \bF^m} \mathbf{1}\left[h \in \DC \right] D(h) \cdot (-1)^{\inner{h}{w}} \notag\\
    &= \frac{1}{Z_C \cdot 2^n} \sum_{v\in \mathcal{C}} \sum_{h \in \bF^{m}} D(h) \cdot (-1)^{\inner{h}{w+v}} \notag \\
    &= \frac{1}{Z_C \cdot 2^n} \sum_{v \in \mathcal{C}} \hat{D}(w+v),
\end{align}
where the fourth equality follows the fact that 
\begin{equation*}
    \mathbf{1}\left[h \in \DC\right] = \frac{1}{2^n} \sum_{v\in \mathcal{C}} (-1)^{\inner{h}{v}},
\end{equation*}
since if $h\in \DC$, for all $v\in \cC$, $(-1)^{\inner{h}{v}} = 1$; if $h\notin \DC$, there exists $v' \in \cC$, such that $\inner{h}{v'} = 1$, then 
\begin{equation*}
    \sum_{v\in \cC} (-1)^{\inner{h}{v}}= \half \sum_{v\in \cC} \rbracket{(-1)^{\inner{h}{v}} + (-1)^{\inner{h}{v + v'}}}  = 0.
\end{equation*}
Since $\hat{D}_C({0^m}) = 1$, the normalization factor is 
\begin{equation}
\label{eq:norm_factor}
    Z_C = \frac{1}{2^n} \sum_{v \in \mathcal{C}} \hat{D}(v). 
\end{equation}
Combining (\ref{eq:fourier_expand}) and (\ref{eq:norm_factor}), we obtain that 
\begin{equation}
\label{eq:hat_DC}
    \hat{D}_C(w) = \Expec{h \gets D_C}{(-1)^{\inner{h}{w}}} = \frac{\sum_{v \in \mathcal{C}} \hat{D}(v+w)}{\sum_{v \in \mathcal{C}} \hat{D}(v)}
\end{equation}

\paragraph{Fourier coefficient of $D$.}
Recall that the distribution $D$ outputs the XOR of $\ell$ independent random unit vectors over $\bF^m$,
\begin{align}
\label{eq:val_fourier}
    \hat{D}(w) = \Expec{h \gets D}{(-1)^{\inner{h}{w}}} = \Expec{u_1, \dots, u_\ell}{\prod_i (-1)^{\inner{u_i}{w}}} = \prod_i \Expec{u_i}{(-1)^{\inner{u_i}{w}}} = \rbracket{1-2\cdot \frac{\wt{w}}{m}}^\ell.
\end{align}
The last equality is obtained by
\begin{equation*}
    \Expec{u_i}{(-1)^{\inner{u_i}{w}}} = \frac{m - \wt{w}}{m} \cdot 1 + \frac{\wt{w}}{m} \cdot (-1) = 1 - 2\cdot \frac{\wt{w}}{m}.
\end{equation*}

We now turn to establishing guarantees for our algorithm. We start by showing that the function $\Expec{h \gets D_C}{(-1)^{\inner{h}{w}}}$ serves as an `ideal' distinguishing function between inputs $\eta$-close to the code, and those $\beta$-separated from the code (hence helping us decide instances of the $\DNCP$ problem). Note that this function is not necessarily efficient to compute, but we will later show that this is exactly what our preprocessing step will help us handle. 

\begin{lemma}
\label{lem:bound_exp_dec}
    For $m,n \in \Nat$, $0<\beta,\eta < 1/6$. Given any $\DNCP_{m,\beta, \eta}$ instance $(C,w)$, letting $\ell = 2 \cdot \ceil{n/\log(1/\beta)}$ be an even integer, we have that: \\ If $w$ is $\eta$-close to the code $\cC$, 
    \begin{align*}
        \Expec{h \gets D_C}{(-1)^{\inner{h}{w}}} > \frac{1}{2} (1-2\eta)^{2(n/\log(1/\beta)+1)};
    \end{align*}
    If $w$ is $\beta$-separated from the code $\cC$, 
    \begin{align*}
        \Expec{h \gets D_C}{(-1)^{\inner{h}{w}}} < 2^{-n}.
    \end{align*}
\end{lemma}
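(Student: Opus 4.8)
The plan is to analyze the numerator and denominator of the formula $\hat{D}_C(w) = \left(\sum_{v \in \cC} \hat{D}(v+w)\right)/\left(\sum_{v \in \cC} \hat{D}(v)\right)$ derived in \eqref{eq:hat_DC}, using the closed form $\hat{D}(w) = (1 - 2\wt{w}/m)^\ell$ from \eqref{eq:val_fourier}. The key quantitative facts I would use are: for the chosen $\ell = 2\ceil{n/\log(1/\beta)}$, a light vector of relative weight $\le \eta < 1/6$ contributes a Fourier coefficient $\ge (1-2\eta)^\ell$ that is bounded away from zero, while a $\beta$-balanced vector (relative weight $\frac12(1\pm\beta)$) contributes $|\hat{D}(v)| \le \beta^\ell \le 2^{-2n}$ by the choice of $\ell$ (since $\beta^{2n/\log(1/\beta)} = 2^{-2n}$). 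This exponential gap between ``light'' and ``balanced'' is the whole engine of the argument.

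First I would handle the denominator $Z_C \cdot 2^n = \sum_{v\in\cC}\hat{D}(v)$. The $v = 0^m$ term contributes exactly $1$. Every other codeword is $\beta$-balanced, so $|\hat{D}(v)| \le \beta^\ell$, and there are fewer than $2^n$ of them; hence $\left|\sum_{v\ne 0}\hat{D}(v)\right| < 2^n \beta^\ell \le 2^{-n}$, which is tiny. So the denominator lies in $(1 - 2^{-n},\, 1 + 2^{-n})$, i.e. it is essentially $1$.

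Next, the \textbf{YES case}. Write $w = v^* + e$ where $v^* \in \cC$ is the (unique, by the $\beta,\eta < 1/6$ promise) close codeword and $\wt{e} \le \eta m$. Re-indexing the sum over $\cC$ by $v \mapsto v + v^*$, the numerator becomes $\sum_{v \in \cC}\hat{D}(v + e)$. The $v = 0$ term is $\hat{D}(e) = (1 - 2\wt{e}/m)^\ell \ge (1-2\eta)^\ell$. For every other $v$, the vector $v + e$ has relative weight within $\eta$ of $\frac12(1\pm\beta)$ (since $v$ is $\beta$-balanced and $e$ is light), so $|1 - 2\wt{v+e}/m| \le \beta + 2\eta$, and I'd want this to still be small enough that $2^n$ such terms are dominated by $(1-2\eta)^\ell$. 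Here one must be slightly careful — with only the parameter $\beta$ available, $\beta + 2\eta$ could be comparable to $1$ if $\eta$ is close to $\beta$; but since both are $< 1/6$, $\beta + 2\eta < 1/2$, and more precisely one checks $2^n(\beta+2\eta)^\ell < \frac12(1-2\eta)^\ell$ using $\ell \ge 2n/\log(1/\beta)$ and $\beta + 2\eta < \sqrt\beta$ or a similar slack bound (this is exactly the place where the theorem as stated for $\DNCP$ uses $\beta$ as an upper bound on everything relevant, and where \Cref{the:sea_alg} later pays the price of replacing $\log(1/\beta)$ by $\log(1/(\beta+2\eta))$). Combining, the numerator is at least $(1-2\eta)^\ell - 2^{-n}\cdot(\text{small}) \ge \frac12(1-2\eta)^\ell$, and dividing by the denominator $\le 1 + 2^{-n}$ gives $\hat{D}_C(w) > \frac12(1-2\eta)^{\ell} \ge \frac12(1-2\eta)^{2(n/\log(1/\beta)+1)}$ as claimed.

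Finally the \textbf{NO case}: $w$ is $\beta$-separated, meaning \emph{every} $w + v$ for $v \in \cC$ is $\beta$-balanced. Then every term of the numerator satisfies $|\hat{D}(w+v)| \le \beta^\ell$, so $\left|\sum_{v\in\cC}\hat{D}(w+v)\right| < 2^n\beta^\ell \le 2^{-2n}$, while the denominator is $\ge 1 - 2^{-n} \ge 1/2$; hence $\hat{D}_C(w) < 2^{-2n+1} < 2^{-n}$. The main obstacle, and the only step needing genuine care rather than bookkeeping, is the YES-case bound on the off-diagonal terms: verifying that $2^n (\beta + 2\eta)^\ell$ is negligible compared to $(1-2\eta)^\ell$ given only $\beta, \eta < 1/6$ and $\ell = 2\ceil{n/\log(1/\beta)}$. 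Everything else is a direct substitution into \eqref{eq:hat_DC} together with the weight-to-Fourier dictionary \eqref{eq:val_fourier}.
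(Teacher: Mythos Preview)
Your overall structure matches the paper's, but there is a genuine gap in the YES case, and the step you flag as ``the main obstacle'' is both unnecessary and, for the stated parameters, actually false.

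You attempt to control the off-diagonal terms $\sum_{v\ne 0}\hat{D}(v+e)$ by showing $2^n(\beta+2\eta)^\ell$ is negligible compared to $(1-2\eta)^\ell$. But with $\ell = 2\ceil{n/\log(1/\beta)}$ and only the hypothesis $\beta,\eta<1/6$, this inequality can fail badly. Take for instance $\beta=\eta=0.15$: then $\beta+2\eta=0.45$, $\ell\approx 0.73n$, and $2^n(\beta+2\eta)^\ell \approx 2^{0.16n}$, which is exponentially \emph{large}, while $(1-2\eta)^\ell\approx 2^{-0.38n}$. Your heuristic ``$\beta+2\eta<\sqrt{\beta}$'' is simply not implied by $\beta,\eta<1/6$. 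This is precisely why the search algorithm (\Cref{the:sea_alg}) must use $\ell$ based on $\log(1/(\beta+2\eta))$ rather than $\log(1/\beta)$: there the off-diagonal terms genuinely must be controlled.

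The fix, and the idea you are missing, is that $\ell$ is \emph{even}, so $\hat{D}(w)=(1-2\wt{w}/m)^\ell\ge 0$ for every $w$. The paper exploits this twice. First, the denominator is lower-bounded by the single term $\hat{D}(0^m)=1$ (not merely $1-2^{-n}$). Second, in the YES case the numerator $\sum_{v\in\cC}\hat{D}(v+e)$ is lower-bounded by the single term $\hat{D}(e)\ge(1-2\eta)^\ell$, with the remaining terms simply \emph{dropped} because they are nonnegative. No bound on $(\beta+2\eta)^\ell$ is needed at all, and the claimed inequality follows immediately from $(1-2\eta)^\ell/(1+2^{-n}) > \tfrac12(1-2\eta)^{2(n/\log(1/\beta)+1)}$. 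Your NO-case argument is fine (and in fact the denominator lower bound of $1$ from positivity makes it slightly cleaner than what you wrote).
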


\begin{proof}
    Consider any $\DNCP_{m, \beta, \eta}$ instance $(C,w)$, where $C$ is the generator matrix of a $\beta$-balanced code $\cC$, which means that for $v = C\cdot x$ with all non-zero $x\in \bF^n$, the Hamming weight $\wt{v} \in \frac{1}{2}(1\pm \beta)m$, thus $0\le \hat{D}(v) \le \beta^l $ according to (\ref{eq:val_fourier}). As $\hat{D}(0^m) = 1$, we have
    \begin{equation}
    \label{eq:lb_sum}
        1 \le \sum_{v \in \cC} \hat{D}(v) \le 1 + (2^n-1) \cdot \beta^\ell .
    \end{equation}
    When $\ell = 2 \cdot \ceil{n/\log(1/\beta)}$, 
    \begin{equation}
    \label{eq:ub_sum}
        \sum_{v \in \cC} \hat{D}(v) \le 1+ (2^n -1)\cdot \beta^l < 1+ 2^n \cdot \beta^{2n/\log(1/\beta)} = 1+ 2^{-n}.
    \end{equation}
    
    Suppose that $(C,w)$ is a YES instance, which implies $w$ is $\eta$-close to the code $\cC$. Without loss of generality, assume $w = C\cdot x + e$ with $\wt{e} \le \eta m$, 
    \begin{align*}
        \Expec{h\gets D_C}{(-1)^{\inner{h}{w}}} = \frac{\sum_{v\in \cC} \hat{D}(v+w)}{\sum_{v\in \cC} \hat{D}(v)}
        > \frac{\hat{D}(e)}{1 + 2^{-n}}
        \ge \frac{(1-2\eta)^\ell}{1 + 2^{-n}}
        >\frac{1}{2} \rbracket{1-2\eta}^{2(n/\log(1/\beta) + 1)},
    \end{align*}
    where the first inequality follows (\ref{eq:ub_sum}) and the positivity of the Fourier coefficients; the last inequality holds for all sufficiently large $n$. When $\beta, \eta < 1/6$, $\rbracket{1-2\eta}^{2n/\log(1/\beta)} \gg 2^{-n}$.  
    
    For $w$ being a NO instance, for all $v\in \cC$, $\wt{v + w} \in \frac{1}{2}(1\pm \beta)m$, thus $\hat{D}(v+w) \le \beta^\ell$,  
    \begin{equation*}
        \Expec{h\gets D_C}{(-1)^{\inner{h}{w}}} = \frac{\sum_{v \in \mathcal{C}} \hat{D}(v+w)}{\sum_{v \in \mathcal{C}} \hat{D}(v)} \le \sum_{v \in \cC} \hat{D}(v+w) \le 2^n \cdot \beta^\ell < 2^{-n}.
    \end{equation*}
    The first inequality follows the lower bound given by (\ref{eq:lb_sum}). 
\end{proof}

\paragraph{Quality of advice.} In the following, we show that with high probability the preprocessing stage outputs {\em good} advice $(h_1,\dots, h_N) \in (\bF^m)^N$: i.e., advice on which the algorithm $\mathsf{Decide}$ yields the correct output on all inputs $w$ that satisfy the promise (for some settings of parameters $(\ell, N, t)$ depending on $n,m, \beta, \eta$). Informally, this will be advice for which inputs that are close to the code lead to an evaluation by $\mathsf{Decide}$ to a relatively high value, while inputs that are far from the code in turn lead to a significantly lower evaluation.

Recall that \cref{lem:bound_exp_dec} shows that the expected value $\Expec{h}{(-1)^{\inner{h}{w}}}$, for $h$ being sampled from $D_C$, serves as an ideal test in the sense that it is large if the vector is close to the code $\cC$; and is small when the vector is far. Using the advice, \cref{alg:decideNCP} then essentially estimates this expectation using the $N$ samples in the advice, namely $(h_1,\dots, h_N)$. Since $h_i$s are independent samples, standard concentration arguments ensure that, with high probability, the advice works for all possible $w$ (via an union bound) under the promise related to $C$. This yields the following lemma.

\begin{lemma}
\label{lem:dec_adv}
    Consider $m,n\in \Nat$ and $0< \beta,\eta < 1/6$. Given any $\DNCP_{\beta, \eta}$ instance $(C,w)$, there is a setting of parameters $(\ell, N, t)$, for which, with overwhelming probability, the preprocessing algorithm {\sf Pre} outputs advice $(h_1,\dots, h_N)$ such that for all $w$ that are either $\eta$-close to or $\beta$-separated from the code $\cC$, the decision returned by \cref{alg:decideNCP} is correct, where
    \begin{align*}
        N = 72m\cdot (1-2\eta)^{-4(n/\log(1/\beta) + 1)}.  
    \end{align*}
\end{lemma}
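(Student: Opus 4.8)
The plan is to read a clean separating threshold off \Cref{lem:bound_exp_dec} and then argue that the empirical average computed by \Cref{alg:decideNCP} is, with overwhelming probability over the advice, simultaneously close to $\hat D_C(w)$ for \emph{every} $w$ meeting the promise. Concretely, set $p = \tfrac12(1-2\eta)^{2(n/\log(1/\beta)+1)}$, so that \Cref{lem:bound_exp_dec} (with $\ell = 2\ceil{n/\log(1/\beta)}$, which is even, so $D_C$ is well-defined) gives $\hat D_C(w) > p$ when $w$ is $\eta$-close to $\cC$ and $\hat D_C(w) < 2^{-n}$ when $w$ is $\beta$-separated; moreover, as already noted inside the proof of that lemma (using $\beta,\eta < 1/6$), $p \gg 2^{-n}$. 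I would hard-code the threshold $t = p/2$, which lies strictly between the two regimes with room to spare on each side.

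The heart of the argument is a concentration bound. Fix any $w$ satisfying the promise. Since the advice entries $h_1,\dots,h_N$ are i.i.d.\ samples from $D_C$, the bits $(-1)^{\inner{h_1}{w}},\dots,(-1)^{\inner{h_N}{w}}$ are i.i.d.\ $\{-1,+1\}$-valued with common expectation exactly $\hat D_C(w)$, and \Cref{alg:decideNCP} outputs $\mathsf{YES}$ iff their average $\tilde D_H(w) := \tfrac1N\sum_i (-1)^{\inner{h_i}{w}}$ exceeds $t$. Hoeffding's inequality then gives
\[
\Pr{\,\bigl|\tilde D_H(w) - \hat D_C(w)\bigr| \ge \delta\,}\;\le\;2\exp(-N\delta^2/2)
\]
for every $\delta > 0$. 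I would take $\delta = p/2 - 2^{-n}$, which is $(1 - o(1))\tfrac p2$ because $p \gg 2^{-n}$. A union bound over all (at most $2^m$) strings $w$, together with the prescribed value $N = 72m(1-2\eta)^{-4(n/\log(1/\beta)+1)}$ — equivalently $N = 18m/p^2$ — then makes the total failure probability at most $2^{m+1}\exp(-N\delta^2/2) \le 2^{m+1}\exp(-\Omega(m)) = 2^{-\Omega(m)}$, i.e.\ overwhelming. Call the complementary event \emph{good}.

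Finally I would check that on the good event \Cref{alg:decideNCP} answers correctly for every $w$ meeting the promise. If $w$ is $\eta$-close then $\tilde D_H(w) > \hat D_C(w) - \delta > p - (p/2 - 2^{-n}) > p/2 = t$, so the algorithm outputs $\mathsf{YES}$; if $w$ is $\beta$-separated then $\tilde D_H(w) < \hat D_C(w) + \delta < 2^{-n} + (p/2 - 2^{-n}) = p/2 = t$, so the algorithm outputs $\mathsf{NO}$. This establishes the lemma with $\ell = 2\ceil{n/\log(1/\beta)}$, threshold $t = p/2$, and the stated $N$.

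There is no deep obstacle here — the statement is a Hoeffding bound wrapped in a union bound — so the only thing demanding genuine care is the bookkeeping around the threshold: one must verify that the $\mathsf{YES}$ margin $p - t$ and the $\mathsf{NO}$ margin $t - 2^{-n}$ both exceed the deviation $\delta$ that the chosen $N$ can afford, which is exactly why $N$ must scale like $1/p^2 = 4(1-2\eta)^{-4(n/\log(1/\beta)+1)}$ up to the $m$-factor from the union bound and an explicit constant. The one structural input is the separation $p \gg 2^{-n}$ inherited from \Cref{lem:bound_exp_dec}: it is what lets a single fixed threshold serve both directions and keeps $\delta$ within a constant factor of $p$.
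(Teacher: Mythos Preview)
Your proposal is correct and follows essentially the same route as the paper: invoke \Cref{lem:bound_exp_dec} for the separation, apply Hoeffding to each fixed $w$, and union-bound over all $2^m$ strings. The only cosmetic differences are in the placement of the threshold (you take $t = p/2$ while the paper takes $t = \tfrac13(1-2\eta)^{2(n/\log(1/\beta)+1)} = 2p/3$, with $\delta = p/3$) and in the fact that the paper applies one-sided Hoeffding bounds on each case separately rather than your two-sided bound; the resulting $N$ and final probability estimate come out identically.
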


\begin{proof}
    Let $\ell = 2\cdot \ceil{n/\log(1/\beta)}$, by \cref{lem:bound_exp_dec}, for any $\DNCP_{m,\beta, \eta}$ instance $(C,w)$, we have either 
    \begin{equation*}
        \Expec{h \gets D_C}{(-1)^{\inner{h}{w}}} > \frac{1}{2} (1-2\eta)^{2(n/\log(1/\beta) + 1)}\text{ or }\Expec{h \gets D_C}{(-1)^{\inner{h}{w}}} < 2^{-n},
    \end{equation*}
    for $w$ being close to or separated from $C$, respectively. 
    As $\eta,\beta<1/6$, when $n$ is sufficiently large, $(1-2\eta)^{2n/\log(1/\beta)} \gg 2^{-n}$. 
    Set the threshold to be
    \begin{equation*}
        t = \frac{1}{3}(1-2\eta)^{2(n/\log(1/\beta) + 1)}
    \end{equation*}
    and let $\delta = \frac{1}{6} (1-2\eta)^{2(n/\log(1/\beta) + 1)}$. 

    For a fixed $w$ being $\eta$-close to the code $\cC$, the probability that \cref{alg:decideNCP} decides $w$ correctly with advice $(h_1,\dots, h_N)$ on the randomness of preprocessing is equivalent to 
    \begin{align*}
        \Prob{h_i}{\sum_{i} (-1)^{\inner{h_i}{w}} > t\cdot N} > \Prob{h_i}{\sum_i (-1)^{\inner{h_i}{w}} > \rbracket{\mathbb{E}_h (-1)^{\inner{h}{w}} - \delta}\cdot N}  \ge 1 -e^{-\delta^2 N/2}
    \end{align*}
    where the first inequality is obtained from $t < \mathbb{E}_h (-1)^{\inner{h}{w}} - \delta$ and the second one follows from Hoeffding's inequality; the probability is at least $(1-e^{-m})$ when $N = 2m\cdot \delta^{-2} = 72m\cdot (1-2 \eta)^{-4(n/\log(1/\beta) + 1)}$.

    Similarly, for $w$ being far from the code $\cC$, the success probability can be lower-bounded by
    \begin{align*}
        \Prob{h_i}{\sum_{i}(-1)^{\inner{h_i}{w}} \le t\cdot N} > \Prob{h_i}{\sum_i (-1)^{\inner{h_i}{w}}\le \rbracket{\mathbb{E}_h (-1)^{\inner{h}{w}} + \delta}\cdot N} \ge 1 - e^{-\delta^2 N /2}
    \end{align*}
    where the first inequality follows from $t > \mathbb{E}_h (-1)^{\inner{h}{w}} + \delta$ and the second inequality is ensured by Hoeffding's inequality. 

    Therefore, by the union bound, the probability that \cref{alg:decideNCP} outputs correctly for all possible $w$ is at least
    \begin{equation*}
        \Prob{h_i}{\forall w, \mathsf{Decide}(w; h_1, \dots, h_N)\text{ is correct}} \ge 1 - 2^m \cdot e^{-m} > 1 - 2^{-0.4m}, 
    \end{equation*}
    when $h_i$ is generated by the preprocessing algorithm. 
\end{proof}

\decAlg*

\begin{proof-of}{Proposition \ref{the:dec_alg}}
    From Lemma \ref{lem:dec_adv}, we conclude that, with overwhelming probability, $\mathsf{Pre}(C)$ returns good advice $(h_1,\dots, h_{N})$, which works for all $w$ that satisfies the promise, when $N = 72m\cdot (1-2\eta)^{-4(n/\log(1/\beta) + 1)}$. This proves that the algorithm with preprocessing works correctly. The advice size and running time of the algorithm is: $O(N\cdot m)  = m^2 \cdot \exp{O(\eta n /\log(1/\beta))}$. 
\end{proof-of}

\subsection{Search Algorithm}
In this section, we describe the search algorithm and prove Theorem \ref{the:sea_alg}. With the same preprocessing procedure applied (as in \cref{alg:pre}), the search algorithm proceeds as follows:

\vspace{\baselineskip}

\begin{algorithm}[H]
\caption{Search algorithm with parameter $N$}
\label{alg:searchNCP}
\KwIn{Vector $w \in \bF^m$ and advice $(h_1, \dots, h_N)$}
\KwOut{$\hat{x}\in \bF^n$, such that $C\cdot \hat{x}$ is the nearest codeword to $w$}
\SetKwProg{Main}{Algorithm}{:}{}


\Main{$\mathsf{Search}(w; h_1,\dots, h_N)$}{
    Set $S = \sum_k (-1)^{\inner{h_k}{w}}$\;
    \For{$i \in \set{1,\dots, m}$}{
        Set $w^{(i)} \gets w$ with $i$-th bit flipped\;
        \If{$\sum_k (-1)^{\inner{h_k}{w^{(i)}}} < S$}{
            Set $\hat{e}_i \gets 0$\;
        }
        \Else{
            Set $\hat{e}_i \gets 1$\;
        }
    }
    Set $\hat{e} = (\hat{e}_1,\dots, \hat{e}_m)$\;
    Solve the linear system $(C\cdot \hat{x} + \hat e = w)$ for $\hat x$\;
    \Return $\hat x$
}
\end{algorithm}

\vspace{\baselineskip}

Suppose the input is $w = C\cdot x + e$ with advice $(h_1,\dots, h_N)$; the algorithm proceeds to recover each bit of $e$ by testing how the value $\sum_k (-1)^{\inner{h_k}{w}}$ changes under corresponding bit flip.
Intuitively, under the distribution $D_C$ used to show \cref{lem:bound_exp_dec}, the value $\Expec{h}{(-1)^{\inner{h}{w}}}$ decreases monotonically when $w$ is farther from $\cC$. We can then toggle each coordinate of $w$ to detect coordinates $i$ for which $e_i =1$.
We show this formally below. 

\begin{lemma}
\label{lem:bound_exp_sea}
    For $m,n\in \Nat$, $1/m \le  \beta,\eta < 1/8$, consider any $\BNCP_{\beta, \eta}$ instance $(C,w)$ with the promise that $w$ is $\eta$-close to $\cC$, and let $\ell = 2\cdot \ceil{n/\log(1/(\beta+ 2 \eta))}$ be an even integer. Assume that $w = C\cdot x + e$ with $\wt{e} \le \eta m$, denote $w^{(i)}, e^{(i)}$ as the vector $w, e$ with their $i$-th bits flipped (respectively). Then, the sign of $\Delta_i$ indicates the value of the $i$-th bit $e_i$, where
    \begin{equation*}
        \Delta_i = \Expec{h \gets D_C}{(-1)^{\inner{h}{w}}} - \Expec{h \gets D_C}{(-1)^{\inner{h}{w^{(i)}}}}.
    \end{equation*}
    In particular, if $e_i =0$, $\Delta_i >\delta$; if $e_i = 1$, $\Delta_i< -\delta$; with
    \begin{equation*}
        \delta > \frac{\ell}{m} (1-4\eta)^\ell. 
    \end{equation*}
\end{lemma}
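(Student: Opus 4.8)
The plan is to reduce everything to a statement about the error vector $e$ alone, exploiting the translation‑invariance of $\hat{D}_C$ modulo codewords. From \eqref{eq:hat_DC}, $\hat{D}_C(w)=\frac{\sum_{v\in\cC}\hat{D}(v+w)}{\sum_{v\in\cC}\hat{D}(v)}$, and since $w=C\cdot x+e$ with $C\cdot x\in\cC$, reindexing the numerator by $v\mapsto v+C\cdot x$ over the subspace $\cC$ gives $\hat{D}_C(w)=\hat{D}_C(e)$, and likewise $\hat{D}_C(w^{(i)})=\hat{D}_C(e^{(i)})$. Hence $\Delta_i=\hat{D}_C(e)-\hat{D}_C(e^{(i)})=\frac{\hat{D}(e)-\hat{D}(e^{(i)})}{Z}$ where $Z:=\sum_{v\in\cC}\hat{D}(v)$. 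The first thing I would observe is that $Z$ is essentially $1$: since $\cC$ is $\beta$-balanced and $\ell$ is even, every term $\hat{D}(v)=(1-2\wt{v}/m)^\ell$ is nonnegative, so $Z\ge\hat{D}(0^m)=1$; and since $\beta\le\beta+2\eta$ we have $\ell\ge 2n/\log(1/(\beta+2\eta))\ge 2n/\log(1/\beta)$, hence $\beta^\ell\le 2^{-2n}$, so by \eqref{eq:val_fourier} and balance, $Z\le 1+(2^n-1)\beta^\ell<1+2^{-n}$. Consequently $\Delta_i$ has the same sign as $\hat{D}(e)-\hat{D}(e^{(i)})$, and $|\Delta_i|\ge\frac{|\hat{D}(e)-\hat{D}(e^{(i)})|}{1+2^{-n}}>\half\,|\hat{D}(e)-\hat{D}(e^{(i)})|$.

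Next I would estimate $\hat{D}(e)-\hat{D}(e^{(i)})$ from the closed form $\hat{D}(y)=(1-2\wt{y}/m)^\ell$. Write $p=\wt{e}/m\le\eta$ and $f(t)=t^\ell$ (so $f'(t)=\ell t^{\ell-1}$). If $e_i=0$, then $\wt{e^{(i)}}=\wt{e}+1$ and $\hat{D}(e)-\hat{D}(e^{(i)})=f(1-2p)-f(1-2p-2/m)$; if $e_i=1$, then $\wt{e}\ge 1$, $\wt{e^{(i)}}=\wt{e}-1$ and $\hat{D}(e)-\hat{D}(e^{(i)})=f(1-2p)-f(1-2p+2/m)$. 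In either case, applying the mean value theorem to $f$ on the relevant interval of length $2/m$ gives $\hat{D}(e)-\hat{D}(e^{(i)})=\pm\frac{2}{m}\,\ell\,\xi^{\ell-1}$ for some interior point $\xi$, with sign $+$ exactly when $e_i=0$. Since $\ell$ is even, $\ell-1$ is odd, so $\xi^{\ell-1}$ has the sign of $\xi$, and it remains to lower bound $\xi$. When $e_i=0$ the interval's smallest point is $1-2p-2/m\ge 1-2\eta-2/m\ge 1-4\eta$, using $1/m\le\eta$; when $e_i=1$ it is $1-2p\ge 1-2\eta\ge 1-4\eta$. As $\eta<1/8$, $1-4\eta>1/2>0$, so $\xi>1-4\eta>0$ and $\xi^{\ell-1}>(1-4\eta)^{\ell-1}>(1-4\eta)^\ell$. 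Therefore $|\hat{D}(e)-\hat{D}(e^{(i)})|>\frac{2\ell}{m}(1-4\eta)^\ell$, with the sign of $\hat{D}(e)-\hat{D}(e^{(i)})$ (hence of $\Delta_i$) positive exactly when $e_i=0$.

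Combining the two steps yields $|\Delta_i|>\half\cdot\frac{2\ell}{m}(1-4\eta)^\ell=\frac{\ell}{m}(1-4\eta)^\ell$, so the claim holds with $\delta=\frac{\ell}{m}(1-4\eta)^\ell$ (and in fact the bound has slack from the factor $\frac{2}{1+2^{-n}}>1$, so $\delta$ can be taken strictly larger). I do not expect a genuine obstacle; the only points that need care are (i) checking that the choice $\ell=2\ceil{n/\log(1/(\beta+2\eta))}$, which is dictated by the search algorithm's parameters rather than by $\beta$ alone, is still large enough to force $Z\le 1+2^{-n}$ — this works precisely because $\beta\le\beta+2\eta$; (ii) using the evenness of $\ell$ to pin down the sign of $\xi^{\ell-1}$; and (iii) using $1/m\le\eta$ to absorb the $2/m$ shift into the clean bound $1-4\eta$. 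An alternative to the mean value theorem is to factor $a^\ell-b^\ell=(a-b)\sum_{j=0}^{\ell-1}a^j b^{\ell-1-j}$ and bound each (positive) summand below by $\min(a,b)^{\ell-1}$, which avoids any appeal to continuity and makes the sign analysis transparent.
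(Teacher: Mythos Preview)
Your overall strategy is the paper's, but there is a genuine algebraic gap in the first step. From \eqref{eq:hat_DC} you correctly get $\hat{D}_C(e)=\frac{\sum_{v\in\cC}\hat{D}(v+e)}{Z}$, but then you write $\Delta_i=\hat{D}_C(e)-\hat{D}_C(e^{(i)})=\frac{\hat{D}(e)-\hat{D}(e^{(i)})}{Z}$. This drops all the terms $\hat{D}(v+e)-\hat{D}(v+e^{(i)})$ for nonzero $v\in\cC$, and they do not vanish. The correct identity is
\[
\Delta_i=\frac{\bigl(\hat{D}(e)-\hat{D}(e^{(i)})\bigr)+\sum_{v\in\cC\setminus\{0\}}\bigl(\hat{D}(v+e)-\hat{D}(v+e^{(i)})\bigr)}{Z},
\]
and one has to argue that the cross sum is negligible. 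For nonzero $v$ the balance of $\cC$ together with $\wt{e},\wt{e^{(i)}}\le(\eta+1/m)m$ gives $|1-2\wt{v+e}/m|,|1-2\wt{v+e^{(i)}}/m|\le \beta+2\eta+2/m$, so each difference is bounded in absolute value by $(\beta+2\eta+2/m)^\ell$; summing over $2^n-1$ codewords and using $\ell\ge 2n/\log(1/(\beta+2\eta))$ makes the total at most $2^{-n}$ (up to the harmless $2/m$), which is indeed dominated by your main term $\frac{2\ell}{m}(1-4\eta)^\ell$ since $\beta,\eta<1/8$. This is exactly what the paper does, and it is the \emph{real} reason $\ell$ is taken with $\beta+2\eta$ rather than $\beta$: it is needed to kill the numerator cross terms, not (as you suggest in point (i)) to control $Z$, for which $\beta$ alone already suffices.

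Once you restore this step, your mean-value-theorem estimate of $\hat{D}(e)-\hat{D}(e^{(i)})$ and the handling of signs via even $\ell$ are fine and match the paper's computation (which does it by direct expansion instead of the MVT). A minor slip: your chain ``$\ell\ge 2n/\log(1/(\beta+2\eta))\ge 2n/\log(1/\beta)$'' has the second inequality reversed, though your final conclusion $\beta^\ell\le 2^{-2n}$ is still correct because $\ell\ge 2n/\log(1/(\beta+2\eta))>2n/\log(1/\beta)$.
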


\begin{proof}
    Assume $\wt{e} = \eta'm$ for $\eta' \le \eta$.   
    
    \medskip\noindent\underline{\bf Case 1}: 
    If the $i$-th bit of $e$ is 0, the difference between $\hat{D}(e)$ and $\hat{D}(e^{(i)})$ is
    \begin{align*}
        \hat{D}(e) - \hat{D}(e^{(i)}) &= (1-2\eta')^\ell - \rbracket{1-2\left(\eta'+ \frac{1}{m}\right)}^\ell\\
        &= \rbracket{1-2\left(\eta'+\frac{1}{m}\right) + \frac{2}{m}}^\ell - \rbracket{1-2\left(\eta'+ \frac{1}{m}\right)}^\ell \\
        &\ge \frac{2\ell}{m} \cdot \rbracket{1-2\left(\eta'+ \frac{1}{m}\right)}^{\ell-1}\\
        &\ge \frac{2\ell}{m} (1-4\eta)^{\ell}.
    \end{align*}
    By (\ref{eq:hat_DC}), we have
    \begin{align*}
        \Delta_i &= \frac{\sum_{v\in \cC}\hat{D}\rbracket{v+w} - \sum_{v\in \cC}\hat{D}\left(v + w^{(i)}\right)}{\sum_{v\in \cC} \hat{D}(v)}\\
        &\ge \frac{\hat{D}\left(e\right) - \hat{D}\left(e^{(i)}\right) - \rbracket{2^n -1}\cdot \rbracket{\beta + 2\eta}^\ell}{1+ (2^n -1)\cdot \beta^\ell}. 
    \end{align*} 
    Let $\alpha = (\beta+ 2\eta)$, $\ell = 2\cdot \ceil{n/\log(1/\alpha)}$, then for sufficiently large $n$ we have
    \begin{equation*}
        2^n \cdot (\beta + 2\eta )^\ell  < 2^{-n}. 
    \end{equation*}
    When $\ell \approx 2 n/\log(1/\alpha)$ and $\beta, \eta < 1/8$, $2^{-n}\ll (2\ell/m)\cdot (1-4\eta)^\ell$ for all sufficiently large $n$. Thus,
    \begin{align*}
        \Delta_i > \frac{1}{2} \rbracket{\hat{D}(e) - \hat{D}(e^{(i)})} \ge \frac{\ell}{m} (1-4\eta)^\ell, 
    \end{align*}
which proves the claim for this case.
    
    \medskip\noindent\underline{\bf Case 2}: 
    If the $i$-th entry of $e$ is $1$, 
    \begin{align*}
        \hat{D}(e^{(i)}) - \hat{D}(e) &= \rbracket{1-2\left(\eta'- \frac{1}{m}\right)}^\ell - (1-2\eta')^\ell \\
        &= \rbracket{1 - 2\eta' + \frac{2}{m}}^\ell - (1-2\eta')^\ell \\
        &> \frac{2\ell}{m} (1-2\eta')^\ell. 
    \end{align*}
Using similar arguments as above, we can obtain 
    \begin{align*}
        - \Delta_i &= \frac{\sum_{v\in \cC}\hat{D}\rbracket{v+w^{(i)}} - \sum_{v\in \cC}\hat{D}\left(v + w\right)}{\sum_{v\in \cC} \hat{D}(v)}\\
        &\ge \frac{\hat{D}\left(e^{(i)}\right) - \hat{D}\left(e\right) - \rbracket{2^n -1}\cdot \rbracket{\beta + 2\eta}^\ell}{1+ (2^n -1)\cdot \beta^\ell} \\
        &> \frac{\ell}{m} (1-2\eta)^\ell,
    \end{align*}
which proves the claim for this case as well.

\end{proof}
We will also require the following concentration claim in our main proof. This follows from standard inequalities. 

\begin{lemma}
\label{lem:con_sea}
    For any $w_1, w_2 \in \bF^m$, if there exists $\delta>0$, such that 
    \begin{equation*}
        \Expec{h \gets D_C}{(-1)^{\inner{h}{w_1}}} - \Expec{h \gets D_C}{(-1)^{\inner{h}{w_2}}} > \delta,
    \end{equation*}
    then for $N \ge 8m \cdot \delta^{-2}$, we have that
    \begin{equation*}
        \Prob{h_1,\dots, h_N \gets D_C}{\sum_k (-1)^{\inner{h_k}{w_1}} > \sum_{k} (-1)^{\inner{h_k}{w_2}}} > 1 - 2 \exp{(-m)}.
    \end{equation*}
\end{lemma}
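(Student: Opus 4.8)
The plan is to apply Hoeffding's inequality to the two empirical averages separately, in the same style as the proof of \Cref{lem:dec_adv}. Write $\mu_1 = \Expec{h \gets D_C}{(-1)^{\inner{h}{w_1}}}$ and $\mu_2 = \Expec{h \gets D_C}{(-1)^{\inner{h}{w_2}}}$, so the hypothesis reads $\mu_1 - \mu_2 > \delta$. Let $A = \sum_k (-1)^{\inner{h_k}{w_1}}$ and $B = \sum_k (-1)^{\inner{h_k}{w_2}}$; these are sums of $N$ independent terms taking values in $\{-1,1\}$, with $\Exp{A} = N\mu_1$ and $\Exp{B} = N\mu_2$. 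The goal is to show $A > B$ with probability $> 1 - 2\exp(-m)$.

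First I would control the downward deviation of $A$. Since each summand lies in an interval of length $2$, Hoeffding's inequality gives $\Pr{A \le N\mu_1 - N\delta/2} \le \exp\!\bigl(-2(N\delta/2)^2/(4N)\bigr) = \exp(-N\delta^2/8)$. By the symmetric (upward) deviation bound, $\Pr{B \ge N\mu_2 + N\delta/2} \le \exp(-N\delta^2/8)$. For $N \ge 8m/\delta^2$ each of these probabilities is at most $\exp(-m)$.

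Next I would combine the two events: whenever neither of them occurs, we have $A > N\mu_1 - N\delta/2 \ge N\mu_2 + N\delta/2 > B$, where the middle inequality is exactly $\mu_1 - \mu_2 \ge \delta$ (multiplied by $N$). A union bound over the two failure events then yields $\Pr{A > B} \ge 1 - 2\exp(-m)$, which is the claimed statement.

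I do not expect any real obstacle here; the lemma is a routine concentration estimate. The only point that needs care is the bookkeeping: splitting the gap $\delta$ evenly between an upper deviation bound for $B$ and a lower deviation bound for $A$ (which is what produces the factor $2$ in the failure probability and the constant $8$ in the sample-complexity threshold), as opposed to analyzing the single variable $A - B$ directly, which would give slightly different constants.
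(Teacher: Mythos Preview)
Your proposal is correct and essentially identical to the paper's proof: both split the gap $\delta$ in half, apply Hoeffding's inequality separately to the $w_1$-sum (lower tail) and the $w_2$-sum (upper tail) to get failure probability $\exp(-\delta^2 N/8)\le \exp(-m)$ for each, and then take a union bound.
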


\begin{proof}
    By Hoeffding's inequality, 
    \begin{equation*}
        \Prob{h_1,\dots, h_N }{\sum_k (-1)^{\inner{h_k}{w_1}} > \rbracket{\mathbb{E}_h{(-1)^{\inner{h}{w_1}}} - \frac{\delta}{2}}\cdot N} > 1 - \exp{(-\delta^2 N/8)} = 1 - \exp{(-m)}, 
    \end{equation*}
    where $N = 8m \cdot \delta^{-2}$.
    Similarly, 
    \begin{equation*}
        \Prob{h_1, \dots, h_N}{\sum_k (-1)^{\inner{h_k}{w_2}} < \rbracket{\mathbb{E}_h{(-1)^{\inner{h}{w_2}}}  + \frac{\delta}{2}}\cdot N} > 1 - \exp{(-\delta^2 N/8)} = 1 - \exp{(-m)}.
    \end{equation*}
    Since $\mathbb{E}_h{(-1)^{\inner{h}{w_1}}} - \frac{\delta}{2} > \mathbb{E}_h{(-1)^{\inner{h}{w_2}}}  + \frac{\delta}{2}$, we have
    \begin{equation*}
        \Prob{h_1, \dots, h_N}{\sum_k (-1)^{\inner{h_k}{w_1}} > \sum_{k} (-1)^{\inner{h_k}{w_2}}} > 1 - 2\cdot \exp{(-m)}.  \hfill\qedhere
    \end{equation*}
\end{proof}

Finally we can show that the preprocessing step again generates good advice such that the search algorithm produces the correct output. This lemma plays a similar role to \Cref{lem:dec_adv} for the decisional setting. 

\begin{lemma}
\label{lem:sea_adv}
    For $m,n\in \Nat$, $0< \beta ,\eta< 1/8$. Given any $\BNCP_{\beta, \eta}$ instance $(C,w)$, there is a setting of parameters $(\ell, N)$, for which, with high probability, the preprocessing algorithm outputs advice $(h_1, \dots, h_N)$, such that, for all $w$ that is $\eta$-close to $\cC$, \cref{alg:searchNCP} outputs $x$ such that $C\cdot x$ is the closest codeword to $\eta$, where
    \begin{equation*}
        N = (2m^3\log^2(1/\alpha)/n^2)\cdot (1-4\eta)^{-4(n/\log(1/\alpha)+1)}, \text{ with } \alpha = \beta+ 2\eta. 
    \end{equation*}
\end{lemma}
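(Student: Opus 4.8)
The plan is to mirror the proof of \Cref{lem:dec_adv}, replacing its single-threshold test with a coordinate-by-coordinate argument driven by \Cref{lem:bound_exp_sea}. Fix $\alpha = \beta + 2\eta$ and take $\ell = 2\ceil{n/\log(1/\alpha)}$ (an even integer, so that $D_C$ is well-defined) exactly as in \Cref{lem:bound_exp_sea}, and set $\delta = (\ell/m)(1-4\eta)^\ell$. Since $\eta < 1/8 < 1/6$, the promise guarantees a unique codeword closest to $w$, so the error string $e$ in the decomposition $w = Cx + e$ with $\wt{e}\le \eta m$ is well-defined; moreover, once $\hat e = e$ the full-rank system $C\hat x + \hat e = w$ has the unique solution $\hat x = x$, and $Cx$ is the nearest codeword. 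Hence it suffices to show that for a suitable $N$ the preprocessed advice $(h_1,\dots,h_N)$ makes \cref{alg:searchNCP} set $\hat e_i = e_i$ for every coordinate $i\in\{1,\dots,m\}$ and every $\eta$-close $w$, except with probability $2^{-\Omega(m)}$.

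First I would fix one such $w$ and one coordinate $i$. By \Cref{lem:bound_exp_sea}, $\Delta_i > \delta$ if $e_i = 0$ and $-\Delta_i > \delta$ if $e_i = 1$ (using $(1-2\eta)^\ell\ge(1-4\eta)^\ell$ for a uniform margin). Applying \Cref{lem:con_sea} with $(w_1,w_2) = (w,w^{(i)})$ in the first case and $(w_1,w_2) = (w^{(i)},w)$ in the second, and taking $N = 8m\delta^{-2}$, the empirical sums $\sum_k(-1)^{\inner{h_k}{\cdot}}$ inherit the same ordering as the expectations — which is exactly the comparison \cref{alg:searchNCP} uses to assign $\hat e_i$ — except with probability at most $2\exp(-m)$. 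A union bound over the $m$ coordinates gives correctness of all of $\hat e$ for this fixed $w$ except with probability at most $2m\exp(-m)$, and a further union bound over the at most $2^m$ strings $w$ that are $\eta$-close to $\cC$ shows that one fixed advice works for all of them except with probability at most $2^{m+1}m\exp(-m) = 2m(2/e)^m = 2^{-\Omega(m)}$ (here $e > 2$). On the complementary event \cref{alg:searchNCP} returns $\hat x = x$ on every valid input.

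For the stated bound on $N$: from $2n/\log(1/\alpha)\le\ell\le 2(n/\log(1/\alpha)+1)$ and $(1-4\eta)<1$ we get $\delta^{-2} = (m/\ell)^2(1-4\eta)^{-2\ell} \le \bigl(m^2\log^2(1/\alpha)/(4n^2)\bigr)(1-4\eta)^{-4(n/\log(1/\alpha)+1)}$, whence $N = 8m\delta^{-2}$ is at most $(2m^3\log^2(1/\alpha)/n^2)(1-4\eta)^{-4(n/\log(1/\alpha)+1)}$, as claimed.

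The analytic work — that the ideal test $\Expec{h\gets D_C}{(-1)^{\inner{h}{w}}}$ is monotone in each bit of $e$ with a quantifiable gap — is already done in \Cref{lem:bound_exp_sea}, so no real obstacle remains. The one point needing care is that a single advice string must be correct simultaneously for all $2^m$ admissible $w$; this is why the per-coordinate failure probability must be exponentially small in $m$, and hence why \Cref{lem:con_sea} is invoked with $N = 8m\delta^{-2}$ samples.
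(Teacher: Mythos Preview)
Your proposal is correct and follows essentially the same approach as the paper's proof: set $\ell = 2\lceil n/\log(1/\alpha)\rceil$, invoke \Cref{lem:bound_exp_sea} for the per-coordinate gap $\delta = (\ell/m)(1-4\eta)^\ell$, apply \Cref{lem:con_sea} with $N = 8m\delta^{-2}$, and take union bounds over the $m$ coordinates and then over all admissible $w$. Your explicit verification that $8m\delta^{-2}$ is bounded by the stated $N$ (via $2n/\log(1/\alpha)\le\ell\le 2(n/\log(1/\alpha)+1)$) and your remark on uniqueness of the nearest codeword are minor clarifications the paper leaves implicit, but otherwise the arguments coincide.
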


\begin{proof}
    Consider any fixed $w = C\cdot x + e$ with $\wt{e} \le \eta m$, denote $w^{(i)}, e^{(i)}$ as the vector $w, e$ with their $i$-th bit flipped. Let $\ell = 2 \cdot \ceil{n/\log(1/\alpha)}$ and $\alpha = \beta+ 2\eta$. 
    By \cref{lem:bound_exp_sea}, when $e_i = 0$
    \begin{equation*}
        \Expec{h \gets D_C}{(-1)^{\inner{h}{w}}} - \Expec{h \gets D_C}{(-1)^{\inner{h}{w^{(i)}}}} > \frac{\ell}{m} (1-4\eta)^\ell.
    \end{equation*}
    Let $\delta = \frac{\ell}{m}(1-4\eta)^\ell$ and $N = 8m \cdot \delta^{-2} = (2m^3\log^2(1/\alpha)/n^2)\cdot (1-4\eta)^{-4(n/\log(1/\alpha)+1)}$, 
    by \cref{lem:con_sea}, the probability that $\mathsf{Search}$ in \cref{alg:searchNCP} sets $\hat{e}_i$ to be $0$ for $h_k$'s being sampled from $D_C$ can be lower-bounded by
    \begin{equation*}
        \Prob{h_1,\dots, h_N}{\hat{e}_i = 0| e_i=0} = \Prob{h_1, \dots, h_N}{\sum_k (-1)^{\inner{h_k}{w}} > \sum_k (-1)^{\inner{h_k}{w^{(i)}}}} > 1 - 2\cdot \exp{(-m)}.
    \end{equation*}
    
    When $e_i =1$, according to \cref{lem:bound_exp_sea}, 
    \begin{equation*}
         \Expec{h \gets D_C}{(-1)^{\inner{h}{w^{(i)}}}} - \Expec{h \gets D_C}{(-1)^{\inner{h}{w}}} > \frac{\ell}{m} (1-4\eta)^\ell,
    \end{equation*}
    then with the same parameters, \cref{lem:con_sea} implies that
    \begin{equation*}
        \Prob{h_1,\dots, h_N}{\hat{e}_i = 1 | e_i=1} = \Prob{h_1, \dots, h_N}{\sum_k (-1)^{\inner{h_k}{w}} \le \sum_k (-1)^{\inner{h_k}{w^{(i)}}}} > 1 - 2\cdot \exp{(-m)}.
    \end{equation*}
    For a fixed $w$, the probability that \cref{alg:searchNCP} finds the correct $x$ is at least
    \begin{equation*}
        \Prob{h_1,\dots, h_N}{\hat{x} = \arg \min_{x} \wt{C\cdot x + w}} = \Prob{h_1,\dots, h_N}{\forall i, \hat{e}_i = e_i} > 1 - 2m \cdot \exp{(-m)}.
    \end{equation*}
    
    Consider all possible $w$ that satisfies the promise, by taking the union bound, 
    \begin{equation*}
        \Prob{h_1,\dots, h_N}{\forall w, \hat{x} = \arg \min_{x} \wt{C\cdot x + w}} > 1 - 2m \cdot 2^m \cdot \exp{(-m)} >1 - m \cdot 2^{-0.4m+1}. \hfill\qedhere
    \end{equation*}
\end{proof}

\begin{proof}[Proof of~\Cref{the:sea_alg}]
    Lemma \ref{lem:sea_adv} indicates that $\mathsf{Pre}(C)$ outputs a good advice matrix $H$ with overwhelming probability, given a balanced code $\cC$, such that Algorithm $\ref{alg:searchNCP}$ works correctly for all possible input $w$, when $ N = (m^3\log^2(1/\alpha)/n^2) \cdot \exp{O(\eta n/\log(1/\alpha))}$ with $\alpha = \beta + 2 \eta$, then the advice size and running time are $O(m \cdot N) = (m^4\log^2(1/\alpha)/n^2) \cdot \exp{O(\eta n/\log(1/\alpha))}$. 
\end{proof}

\subsection{Search to Decision Reduction}
\label{sec:s2d}

\begin{proposition}
\label{prop:s2d}
    If there exists an algorithm for $\DNCP_{\beta+2\eta, \eta}$ with message length $n-1$ and block length $m$ with advice size $a$ and time complexity $t$, then there exists an algorithm for $\BNCP_{\beta, \eta}$ with advice size $an$ and time complexity $tn$.
\end{proposition}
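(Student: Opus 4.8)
The plan is to recover the message $x$ of the given $\BNCP_{\beta,\eta}$ instance $(C,w)$ one coordinate at a time, invoking the assumed $\DNCP_{\beta+2\eta,\eta}$ algorithm once per coordinate on a subcode of $\cC$ of message length $n-1$. Write $c_1,\dots,c_n$ for the columns of $C$; they are linearly independent since $C$ has full rank. For $i\in[n]$, let $C_{-i}$ be the $m\times(n-1)$ matrix obtained by deleting column $i$; it generates a rank-$(n-1)$, block-length-$m$ subcode $\cC_{-i}\subseteq\cC$. In the (unbounded) preprocessing phase of the reduction, which depends only on $C$, I would run the preprocessing of the $\DNCP_{\beta+2\eta,\eta}$ algorithm on each $C_{-i}$ to produce advice $a_i$ of size $a$, and output $(a_1,\dots,a_n)$, of total size $an$. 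Online, on input $w$: for each $i$, run the decision algorithm on $(C_{-i},w)$ with advice $a_i$; set $\hat{x}_i=0$ if it answers $\mathsf{YES}$, and $\hat{x}_i=1$ if it answers $\mathsf{NO}$; return $\hat{x}=(\hat{x}_1,\dots,\hat{x}_n)$. This is $n$ oracle calls plus polynomial bookkeeping, hence time $tn$ up to lower-order terms, and advice $an$.

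The crux is to check that, for every $i$, $(C_{-i},w)$ is a legal $\DNCP_{\beta+2\eta,\eta}$ instance whose correct answer is $\mathsf{YES}$ iff $x_i=0$, where $C\cdot x$ is the closest codeword to $w$ (unique by the balance of $\cC$) and $e:=w+C\cdot x$ has $\wt{e}<\eta m$. First, $\cC_{-i}$ is $(\beta+2\eta)$-balanced since it is a subcode of the $\beta$-balanced code $\cC$. If $x_i=0$, then $w=\sum_{j\ne i}x_jc_j+e$, so $\dist{\cC_{-i},w}\le\wt{e}<\eta m$ and $w$ is $\eta$-close to $\cC_{-i}$: a $\mathsf{YES}$ instance. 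If $x_i=1$, then for any $v\in\cC_{-i}$ we can write $w+v=c_i+v'+e$ with $v'\in\cC_{-i}$; the vector $u:=c_i+v'$ lies in the coset $c_i+\cC_{-i}\subseteq\cC$ and is nonzero (else $c_i=v'\in\cC_{-i}$, contradicting full rank), so $\wt{u}\in\tfrac12(1\pm\beta)m$, and the triangle inequality for Hamming weight gives
\[ \wt{w+v}=\wt{u+e}\in\tfrac12(1\pm\beta)m\,\pm\,\eta m\ \subseteq\ \tfrac12\bigl(1\pm(\beta+2\eta)\bigr)m . \]
Hence $w$ is $(\beta+2\eta)$-separated from $\cC_{-i}$: a $\mathsf{NO}$ instance. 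Either way the decision algorithm returns $x_i$, so $\hat{x}=x$ and $C\cdot\hat{x}$ is the closest codeword.

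I do not expect a genuine obstacle; the only delicate point is the separation parameter. Absorbing the weight-$\le\eta m$ error into a $\beta$-balanced coset representative moves the Hamming weight by at most $\eta m$ on each side of $\tfrac12 m$, so the best separation certifiable for the NO case is $\beta+2\eta$, not $\beta$; this is precisely why \Cref{the:sea_alg} carries $\log(1/(\beta+2\eta))$ where \Cref{the:dec_alg} has $\log(1/\beta)$. The remaining points — full rank of each $C_{-i}$, the union over the $n$ coordinates, uniqueness of $x$ (noted in the paper for the relevant parameter range), and the complexity accounting — are routine.
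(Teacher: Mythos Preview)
Your proposal is correct and follows essentially the same approach as the paper: delete one column at a time to form the subcodes $C_{-i}$, query the $\DNCP_{\beta+2\eta,\eta}$ oracle on $(C_{-i},w)$, and read off $x_i$ from the YES/NO answer, with the separation in the NO case coming from the fact that $c_i+v'$ is a nonzero (hence $\beta$-balanced) codeword of $\cC$ shifted by an error of weight $<\eta m$. The only cosmetic difference is that you argue the $(\beta+2\eta)$-separation directly via the triangle inequality where the paper phrases it as a contradiction, and you spell out the advice concatenation explicitly.
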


As a consequence of \cref{prop:s2d}, the search problem $\BNCP_{\beta,\eta}$ can be solved using the decision algorithm in time $m^2n \exp O(\eta n/\log(1/\alpha))$, where $\alpha = \beta+ 2\eta$. 

\begin{proof}
    Given a $\BNCP_{\beta, \eta}$ instance $(C, w)$, where $C\in \bF^{m\times n}$ and $w\in \bF^m$ the algorithm: (1) constructs $n$ codes $C^{(i)} \in \bF^{m \times (n-1)}$ by removing the $i$-th column of $C$, (2) queries the $\DNCP_{\beta+ 2\eta, \eta}$ oracle on $(C^{(i)}, w)$ for all $i$, and (3) outputs $\hat{x}$ where
    \[ \hat{x}_i = \begin{cases} 0, &\text{if $\DNCP_{\beta+ 2\eta, \eta}(C^{(i)}, w)$ answers YES}, \\ 1, &\text{if not.} \end{cases} \]
    All codes $C^{(i)}$ are subcodes of $C$ and therefore $\beta$-balanced.
    
    Assuming $w$ is $\eta$-close to $C$, it equals $C\cdot x + e$ for some $e$ of weight less than $\eta m$.  We show that $\hat{x}$ must equal $x$.  If $x_i = 0$ then $Cx$ equals $C^{(i)}x^{(i)}$, where $x^{(i)}$ is $x$ with its $i$-th entry removed.  Therefore $w = C^{(i)}x^{(i)} + e$ is also $\eta$-close to $C^{(i)}$ and $\DNCP_{\beta+ 2\eta, \eta}$ answers YES.  
    
    If $x_i = 1$, then $w$ equals $C^{(i)}x^{(i)} + c^{(i)} + e$, where $c^{(i)}$ is the $i$-th column of $C$.  It remains to argue that $c^{(i)} + e$ is $(\beta + 2\eta)$-separated from $C^{(i)}$. This guarantees a NO answer from the oracle.  If, for contradiction, $c^{(i)} + e$ was 
    $\tfrac12(1 - \beta - 2\eta)$-close or $\tfrac12(1 + \beta + 2\eta)$-far from some codeword $c$ in $C^{(i)}$, then $c^{(i)}$ would be $\tfrac12(1 - \beta)$-close to or $\tfrac12(1 + \beta)$-far from $c$.  $c + c^{(i)}$ would then be a $\beta$-unbalanced codeword of $C$, violating the promise.
\end{proof}

\section{On the Optimality of Our Algorithms}
\label{sec:imposs}
In this section, we show the optimality of the algorithm presented in the previous section. We emphasize that all our impossibility results in this section are stated for the {\em decision problem} $\DNCP_{\beta, \eta}$ with parameters set to $m = \poly(n)$ and $\beta = 3\sqrt{n/m}$. This value of $\beta$ allows an overwhelming fraction of linear codes to meet the balance requirement (see (\ref{eq:bd_code})). 

\subsection{Limits of Threshold Distinguishers}
\label{subsec:imp_thr}

Recall that \cref{alg:decideNCP} proceeds as follows: on input $w\in \bF^m$, it takes as advice a matrix $H = (h_1,\dots, h_N) \in \bF^{N \times m}$ associated with the code $\cC$, it calculates $\sum_i (-1)^{\inner{h_i}{w}}$ and compares it with a threshold $T$. We call this type of algorithm a \emph{threshold distinguisher} for code $\cC$ and refer to $N$ as its size. 

More generally, we allow a threshold distinguisher to apply some affine shift $b \in \bF^N$. That is, it computes $\sum_i(-1)^{\inner{h_i}{w}+b_i}$ where $b_i$ denotes the $i$-th coordinate of $b$.  In \Cref{sec:intervals} we show that threshold distinguishers are somewhat more powerful than they seem.

Our main lower bound shows that threshold distinguishers have limited power.


\begin{theorem}
\label{the:imp_thr}
    For $n, m, d \in \Nat$, $\beta = 3\sqrt{n/m}$, for any code $\cC$ of message length $n$, blocklength $m$, and dual distance $d$, no threshold distinguisher of size $\tfrac17 \exp[\min\{\eta d/6, 3n\}]$ correctly decides whether $w$ is $\eta$-close to or $\beta$-separated from $\cC$ for all $w$. 
\end{theorem}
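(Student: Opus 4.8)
The plan is to argue by contradiction. Suppose some threshold distinguisher of size $N := \tfrac17\exp[\min\{\eta d/6,3n\}]$ were correct on all $w$; write it as advice $H=(h_1,\dots,h_N)\in\bF^{N\times m}$, affine shift $b\in\bF^N$, and threshold $T$, and put $f(w)=\sum_{i=1}^N(-1)^{\inner{h_i}{w}+b_i}$. Replacing $b$ by its complement flips $f$ to $-f$, so we may assume the distinguisher answers $\mathsf{YES}$ exactly when $f(w)>T$; correctness then means $f(w)>T$ for every $\eta$-close $w$ and $f(w)\le T$ for every $\beta$-separated $w$. The elementary but crucial observation is that $f(w)$ is always an integer with $f(w)\equiv N\pmod 2$, so whenever $w^*$ is $\eta$-close and $w'$ is $\beta$-separated, correctness forces $f(w^*)-f(w')\ge 2$. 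I will contradict this by exhibiting such a pair with $f(w^*)-f(w')<1$, obtained by estimating $f$ in expectation under two carefully chosen distributions and then passing to a minimizer and a maximizer of $f$ in their respective supports.

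For the NO side, let $W_{\mathrm{no}}$ be a uniform $w\in\bF^m$ conditioned on being $\beta$-separated from $\cC$. Since $\beta=3\sqrt{n/m}$, Hoeffding's inequality and a union bound over the $2^n$ codewords (exactly the computation leading to \eqref{eq:bd_code}) show a uniform $w$ fails to be $\beta$-separated with probability at most $2^{n+1}e^{-\beta^2m/2}\le e^{-3n}$ for $n\ge 1$. For an unconditioned uniform $w$, $\Expec{w}{(-1)^{\inner{h_i}{w}}}=\mathbf 1[h_i=0^m]$, hence $\Expec{w}{f(w)}=c_0$ where $c_0:=\sum_{i:\,h_i=0^m}(-1)^{b_i}$; and conditioning on an event of probability at least $1-e^{-3n}$ moves any expectation of a quantity of magnitude at most $N$ by at most $2Ne^{-3n}$. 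Therefore $\Expec{w\gets W_{\mathrm{no}}}{f(w)}\ge c_0-2Ne^{-3n}$.

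For the YES side, let $W_{\mathrm{yes}}$ sample a uniform $x\in\bF^n$ and an independent $e\in\bF^m$ with i.i.d.\ $\ber(\eta/2)$ coordinates conditioned on $\wt e<\eta m$, and output $w=C\cdot x+e$; every word in the support of $W_{\mathrm{yes}}$ is $\eta$-close. Using independence and $\Expec{x}{(-1)^{\inner{h_i}{Cx}}}=\mathbf 1[h_i\in\cC^{\bot}]$, we get $\Expec{w\gets W_{\mathrm{yes}}}{f(w)}=\sum_{i:\,h_i\in\cC^{\bot}}(-1)^{b_i}\,\Expec{e}{(-1)^{\inner{h_i}{e}}\mid \wt e<\eta m}$, whose $h_i=0^m$ terms again sum to $c_0$. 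For $h_i\in\cC^{\bot}\setminus\{0^m\}$ the dual distance bound gives $\wt{h_i}\ge d$, and the unconditioned Fourier coefficient of $\ber(\eta/2)^{\otimes m}$ at $h_i$ equals $(1-\eta)^{\wt{h_i}}\le e^{-\eta d}$. It is essential here that $1-\eta>0$, so that this bound is monotone in the Hamming weight; this is exactly why the error distribution must be Bernoulli rather than the XOR-of-unit-vectors distribution $D$ used in the preprocessing, whose Fourier coefficient $(1-2\wt h/m)^\ell$ at a heavy dual codeword such as $1^m$ (when $1^m\in\cC^{\bot}$) can be as large as $1$. A Chernoff bound gives $\Pr{\wt e\ge\eta m}\le e^{-\eta m/6}$, and combining with $d\le m$ one checks $|\Expec{e}{(-1)^{\inner{h_i}{e}}\mid\wt e<\eta m}|\le 4e^{-\eta d/6}$ in all parameter regimes (including the severe-conditioning regime where $\eta m$ is small). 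Summing over the at most $N$ dual rows, $\Expec{w\gets W_{\mathrm{yes}}}{f(w)}\le c_0+4Ne^{-\eta d/6}$.

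Combining the two estimates gives $\Expec{W_{\mathrm{yes}}}{f}-\Expec{W_{\mathrm{no}}}{f}\le 4Ne^{-\eta d/6}+2Ne^{-3n}\le 6N\exp(-\min\{\eta d/6,3n\})=\tfrac67<1$ for the stated $N$. Taking $w^*$ in the support of $W_{\mathrm{yes}}$ with $f(w^*)\le\Expec{W_{\mathrm{yes}}}{f}$ and $w'$ in the support of $W_{\mathrm{no}}$ with $f(w')\ge\Expec{W_{\mathrm{no}}}{f}$ then yields an $\eta$-close $w^*$ and a $\beta$-separated $w'$ with $f(w^*)-f(w')<1$, contradicting $f(w^*)-f(w')\ge 2$. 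I expect the YES-side Fourier estimate to be the step demanding the most care: pinning down the bound $|\Expec{e}{(-1)^{\inner{h_i}{e}}\mid\wt e<\eta m}|\le 4e^{-\eta d/6}$ cleanly and uniformly over parameters, and — the conceptual crux — recognizing that the error distribution must have controlled bias under all heavy linear tests, not merely light ones, which is where a naive imitation of the algorithm's distribution $D$ would break down.
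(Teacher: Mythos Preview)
Your proposal is correct and follows essentially the same approach as the paper: compare the behavior of the distinguisher on a random noisy codeword $Cx+e$ with $e\sim\ber(\eta/2)^m$ versus a uniformly random string, use the dual-distance bound $(1-\eta)^{\wt{h_i}}\le e^{-\eta d}$ for nonzero $h_i\in\cC^\perp$, and derive a contradiction from Chernoff bounds on the atypical events. The only cosmetic differences are that you condition the two distributions on being genuine YES/NO instances up front (the paper handles the bad events afterward via the threshold $T$), and you extract the contradiction via a minimizer/maximizer pair rather than through $T$; your parity observation $f(w)\equiv N\pmod 2$ is correct but unnecessary, since the integrality gap $f(w^*)-f(w')\ge 1$ already contradicts your bound $<1$.
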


All but $2^{-n/2}$ linear codes have dual distance at most $d = n/(2\log(m))$.  The reason is that there are at most $m^d$ non-zero strings of weight at most $d$, and each of them is a dual codeword with probability $2^{-n}$.  By a union bound all of them fail to be in the dual except with probability $2^{-n/2}$.  This gives the following corollary:

\begin{corollary}
\label{cor:lower-random}
    For all but a $2^{-n/2}$ fraction of linear codes $\cC$, no threshold distinguisher of size $\tfrac17 \exp\min\{(\eta n / 12\log m), 3n\}$ decides whether $w$ is $\eta$-close or $3\sqrt{n/m}$-separated from $\cC$ for all $w$.
\end{corollary}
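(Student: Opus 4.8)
\emph{The corollary is just the specialization of \Cref{the:imp_thr} to random linear codes}, so the plan splits into a short reduction, then the proof of \Cref{the:imp_thr} itself, where the work lives. For the reduction, set $d_0=n/(2\log m)$ and first argue that almost every code has dual distance above $d_0$: for a uniformly random $C\in\bF^{m\times n}$ and a fixed nonzero $h\in\bF^m$, the event $h\in\DC$ (i.e.\ $h^\top C=0$) has probability $2^{-n}$, since the $n$ columns of $C$ are independent and each is orthogonal to $h$ with probability $\tfrac12$; as there are at most $m^{d_0}=2^{d_0\log m}=2^{n/2}$ nonzero strings of weight $\le d_0$, a union bound gives that except on a $2^{-n/2}$ fraction of codes none of them lies in $\DC$, i.e.\ the dual distance $d$ of $\cC$ exceeds $d_0$. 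For any such code, \Cref{the:imp_thr} (invoked with the actual $d\ge d_0$, using that $\min\{\eta d/6,3n\}$ is nondecreasing in $d$ and that an impossibility for size-$N$ distinguishers subsumes all smaller sizes) forbids threshold distinguishers of size $\tfrac17\exp\min\{\eta d_0/6,\,3n\}$, i.e.\ $\tfrac17\exp\min\{\eta n/(12\log m),\,3n\}$ --- which is the corollary.

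\emph{For \Cref{the:imp_thr} I would argue by contradiction.} Suppose a threshold distinguisher $(H,b,T)$ of size $N\le\tfrac17\exp\min\{\eta d/6,3n\}$ is correct on every promise instance and write $f(w)=\tfrac1N\sum_i(-1)^{b_i+\inner{h_i}{w}}$. Folding zero rows into $T$ and merging equal rows (with multiplicities $m_h$), possibly shrinking $N$, we may take the $h$'s nonzero; if nothing remains then $f$ is constant and cannot separate a YES from a NO instance, so we are done. Correctness forces $\min_{w\ \eta\text{-close}}f(w)>\max_{w\ \beta\text{-separated}}f(w)$ with $\beta=3\sqrt{n/m}$, and the goal is to contradict this. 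Consider $W_{\mathrm{yes}}$, a random near-codeword $w=C\cdot x+e$ with $x$ uniform and $e$ i.i.d.\ $\ber(\eta/2)$ conditioned on $\wt{e}<\eta m$ (hence $\eta$-close), and $W_{\mathrm{no}}$, the uniform distribution, which by \eqref{eq:bd_code} is supported on $\beta$-separated words off a $2^{-\Omega(n)}$ fraction. The one step that uses the hypothesis is the Fourier computation: $\Expec{W_{\mathrm{no}}}{(-1)^{\inner{h}{w}}}=\mathbf 1[h=0^m]$ while $\Expec{W_{\mathrm{yes}}}{(-1)^{\inner{h}{w}}}=\mathbf 1[h\in\DC]\,(1-\eta)^{\wt{h}}$ (up to negligible conditioning error), and any nonzero dual codeword has weight $\ge d$, so summing over rows gives $\bigl|\Expec{W_{\mathrm{yes}}}{f}-\Expec{W_{\mathrm{no}}}{f}\bigr|\le(1-\eta)^d+o(1)=:\varepsilon$.

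\emph{To close the argument}, observe that as $W_{\mathrm{yes}}$ is supported on $\eta$-close words, $\min_{w\ \eta\text{-close}}f(w)\le\Expec{W_{\mathrm{yes}}}{f}\le\Expec{W_{\mathrm{no}}}{f}+\varepsilon$, so it is enough to find a $\beta$-separated $w$ with $f(w)>\Expec{W_{\mathrm{no}}}{f}+\varepsilon$. This I would get from anti-concentration of $f$ under $W_{\mathrm{no}}$: there $f\in[-1,1]$, $\E f=0$, and $\Var{f}=\tfrac1{N^2}\sum_h m_h^2\ge1/N$ (Cauchy--Schwarz), so a one-sided truncation bound gives $\Prob{W_{\mathrm{no}}}{f(w)>\varepsilon}=\Omega(1/N)$ --- using that $\varepsilon\ll1/N$, which the size bound ensures since $(1-\eta)^d\le e^{-\eta d}$ and $N\le\tfrac17 e^{\eta d/6}$ yield $(1-\eta)^d\cdot N\le\tfrac17 e^{-5\eta d/6}\to0$. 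Since $\Omega(1/N)=\Omega\bigl(e^{-\min\{\eta d/6,3n\}}\bigr)=\Omega(e^{-3n})$ exceeds the $2^{n+1}e^{-9n/2}=2^{-\Omega(n)}$ fraction of words that fail to be $3\sqrt{n/m}$-separated, some $\beta$-separated $w$ indeed satisfies $f(w)>\Expec{W_{\mathrm{no}}}{f}+\varepsilon\ge\min_{w\ \eta\text{-close}}f(w)>\max_{w\ \beta\text{-separated}}f(w)$ --- the desired contradiction.

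\emph{The genuinely delicate step} is this last comparison: forcing the anti-concentration probability $\Omega(1/N)$ past the Gilbert--Varshamov failure rate $2^{-\Theta(n)}$. That is exactly what the constants in the statement are spent on --- the factor $\tfrac17$, the choice $\beta=3\sqrt{n/m}$ (which pins the GV exponent near $5.5n$), and the cap $\min\{\,\cdot\,,3n\}$ (which keeps $1/N\gtrsim e^{-3n}$ once $\eta d$ grows large). Threading these constants, together with the bookkeeping for degenerate and merged distinguishers and the conditioning in $W_{\mathrm{yes}}$, is where the effort concentrates; by contrast the coding-theoretic input --- dual distance annihilates every nonzero Fourier mode of $W_{\mathrm{yes}}-W_{\mathrm{no}}$ except those of light dual codewords --- is one line. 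I expect the tempting shortcut of bounding the distinguisher's Fourier $\ell_1$-norm to fail here, as it would only exclude sizes polynomial in $\eta d$, not the exponential-in-$\eta d$ sizes that the theorem rules out.
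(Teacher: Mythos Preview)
Your reduction from the corollary to \Cref{the:imp_thr} is exactly the paper's: bound the dual distance of a random code from below by $n/(2\log m)$ via a union bound over the $\le m^{d_0}=2^{n/2}$ light would-be dual codewords, then invoke the theorem with $d\ge d_0$.

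Your proof of \Cref{the:imp_thr} itself, however, takes a different route. Both arguments compare the distinguisher's expectation on a random corrupted codeword against a uniform string, and both use the same Fourier identity to bound the gap by $(1-\eta)^d$ per nonzero dual row. The divergence is in how the contradiction is extracted. The paper exploits the \emph{integer-valuedness} of $A_{H,b}$ directly: correctness forces $A\ge T$ on every YES instance and $A\le T-1$ on every NO instance, so by total probability the two expectations differ by at least $1$ minus the tail corrections for the FAR and non-separated events, giving $1\le N e^{-\eta d}+2N\Pr[\text{FAR}]+4N\Pr[\overline{\text{SEP}}]$ and hence the size bound in one line. You instead locate a single bad $\beta$-separated $w$ via anti-concentration under the uniform measure. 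This works --- the one-sided bound you invoke follows from $\E f_+=\tfrac12\E|f|\ge\tfrac12\E f^2\ge 1/(2N)$ (using $|f|\le 1$) and then $\Pr[f>\varepsilon]\ge\E f_+-\varepsilon$ --- but it is more circuitous, and the ``negligible conditioning error'' in your $W_{\mathrm{yes}}$ is actually of order $e^{-\eta m/6}$, which is only a constant fraction of $1/N$ rather than $o(1/N)$, so it must be tracked alongside the other constants rather than absorbed. The paper's argument is shorter precisely because the integer threshold gap already supplies the $1/N$ separation that you recover through second moments; the added generality of your variance route (not relying on integrality) buys nothing here since the constants are tuned to the integer setting anyway.
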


In particular, if $m$ is polynomial in $n$ and $\eta \gg (\log n)^2 / n$, polynomial-size threshold distinguishers fail to solve $\DNCP_{\eta, \beta}$ with $\beta = 3\sqrt{n/m}$.

\begin{proof-of}{\Cref{the:imp_thr}}
    By flipping all $b_i$ if necessary, we may assume that the distinguisher $\mathsf{accept}$s when $A_{H,b}(w) \ge T$, and $\mathsf{reject}$s otherwise, where $A_{H,b}(w) = \sum_i(-1)^{\inner{h_i}{w}+b_i}$. 
    
    We argue that $A_{H,b}$ cannot ``tell apart'' a randomly corrupted random codeword from a truly random string.  Let $w = C\cdot x +e$ with random $x$ and the bits of $e$ independent $\ber(\eta/2)$.  We first show that:
    \[ \E A_{H,b}(w) \leq \E A_{H, b}(r) + N \exp(-\eta d), \]
    where $r$ is a truly random string.  By linearity of expectation, it is sufficient to argue that $\E (-1)^{\langle h, w\rangle + b}$ is $(1 - \eta)^d$-close to $\E (-1)^{\langle h, r\rangle + b}$ for every row $(h, b)$.  If $h$ is the all zero string both expectations are one.  If $h$ not a dual codeword both are zero.  The only difference comes from non-zero dual codewords.  The difference is then $(1 - \eta)^{\wt{h}} \leq (1 - \eta)^d \leq \exp(-\eta d)$.

    Now assume $A_{H, b}$ outputs at least $T$ for all strings $\eta$-close to the code and at most $T - 1$ for all $\beta$-separated strings.  By the law of total probability,
    \[ \E A_{H, b}(w) \geq T(1 - \pr(FAR)) + \E[A_{H, b}(w)\ |\ FAR] \pr(FAR) 
    \geq T - 2N \pr(FAR), \]
    where $FAR$ is the event that $w$ is $\eta$-far from $C$.  By a Chernoff bound, this has probability at most $\exp(-\eta m / 6)$.  On the other hand,
    \[ \E A_{H, b}(r) \leq (T - 1)(1 - \pr(\overline{SEP})) + \E[A_{H, b}(w)\ |\ \overline{SEP}] \pr(\overline{SEP}) \leq (T - 1) + 2N \pr(\overline{SEP}), \]
    where $\overline{SEP}$ is the event that $r$ is not $\beta$-separated from $C$.  By a union bound and a Chernoff bound, $\overline{SEP}$ had probability at most $2^n \cdot 2 \exp(-\beta^2 m/2) \leq 2 \exp(-3n)$.  In summary,
    \begin{align*}
    T - 2N\exp(-\eta m / 6) &\leq \E A_{H, b}(w) \\ &\leq \E A_{H, b}(r) + N \exp(-\eta d) \\ &\leq (T - 1) + N \exp(-\eta d) + 2N \cdot 2 \exp(-3 n). 
    \end{align*}
    It follows that 
    \[ \frac{1}{N} \leq \exp(-\eta d) + 2\exp(-\eta m/6)  + 4\exp(-3 n) \leq 7 \max\{\exp(-\eta d), \exp(-\eta m/6), \exp(-3 n)\}. \]
    As $d \leq m$, $\exp(-\eta d)$ and $\exp(-\eta m/6)$ are both bounded by $\exp(-\eta d/6)$.
\end{proof-of}

\subsection{Limits of Interval Distinguishers}
\label{sec:intervals}

Instead of a threshold, a distinguisher could potentially base its decision on some other function of the measure $A_{H, b}(w)$.  For example, it may be sensible to accept inputs whose value is close to zero and reject those whose value is far from zero, be it positive or negative.  This is a speicial case of an interval distinguisher.  In general, a \emph{$k$-interval distinguisher} partitions the range of $A_{H, b}$ into $k$ intervals and decides based on the identity of the interval that $A_{H, b}(w)$ belongs to.

\begin{corollary}
\label{the:int_dist}
For $n, m \in \Nat$, $\beta = 3\sqrt{n/m}$, for any code $\cC$ of message length $n$, blocklength $m$, and dual distance $d$, no $k$-interval distinguisher of size $\tfrac{1}{14} \exp \left[\min\{\eta d/6, 3n\}/(k-1)\right]$ correctly decides whether $w$ is $\eta$-close to or $\beta$-separated from $\cC$ for all $w$.    
\end{corollary}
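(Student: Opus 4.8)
\begin{proof-of}{\Cref{the:int_dist}}
The plan is to reduce a $k$-interval distinguisher to an ordinary threshold distinguisher of the kind ruled out by \Cref{the:imp_thr}, at the cost of blowing up the advice size from $N$ to $(2N)^{k-1}$. The starting observation is that the decision of a $k$-interval distinguisher is a \emph{degree-$(k-1)$ polynomial threshold} in the measure $A_{H,b}(w) = \sum_i (-1)^{\inner{h_i}{w}+b_i}$: since $A_{H,b}(w)$ only takes values in $\{-N,-N+2,\dots,N\}$, after merging any two adjacent intervals carrying the same label we are left with $k'\le k$ intervals whose labels strictly alternate, separated by $k'-1$ boundaries. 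Crucially, these boundaries may be chosen to be \emph{integers of parity opposite to $N$} with absolute value at most $N-1$ (the midpoints of consecutive values of $A_{H,b}$). Hence there is $\sigma\in\{-1,1\}$ such that the distinguisher accepts $w$ iff $Q(A_{H,b}(w))\ge 1$, where $Q(a):=\sigma\prod_{j=1}^{k'-1}(a-\tau_j)$. Because each factor $A_{H,b}(w)-\tau_j$ is odd and nonzero, $Q(A_{H,b}(w))$ is an \emph{odd integer}, so the distinguisher accepts exactly when $Q(A_{H,b}(w))\ge 1$ and rejects exactly when $Q(A_{H,b}(w))\le -1$. This integrality-with-a-gap is precisely the property that, in the proof of \Cref{the:imp_thr}, was supplied by the integrality of $A_{H,b}$ itself; keeping it intact is the reason for the parity bookkeeping above. (If $k'=1$ the distinguisher is constant and cannot separate the nonempty YES and NO sets, so $k'\ge 2$.)

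Next I would unfold $Q$ into a threshold of parities. Writing $Q(a)=\sigma\sum_{t=0}^{k'-1}c_t a^t$ with integer coefficients, and using that a product of parities is again a parity, $A_{H,b}(w)^t=\sum_{(i_1,\dots,i_t)\in[N]^t}(-1)^{\inner{h_{i_1}+\dots+h_{i_t}}{w}+b_{i_1}+\dots+b_{i_t}}$ is a sum of $N^t$ parity functions. Substituting and absorbing the signs $\sigma c_t/|c_t|$ into affine shifts, $Q(A_{H,b}(w))=\sum_{\ell=1}^{s}(-1)^{\inner{g_\ell}{w}+b'_\ell}$ for suitable $g_\ell\in\bF^m$, $b'_\ell\in\bF$, with $s=\sum_t |c_t| N^t$. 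Since $\sum_t|c_t|x^t$ is dominated coefficientwise by $\prod_j(x+|\tau_j|)$ and $|\tau_j|\le N-1$, we get $s\le (2N-1)^{k-1}<(2N)^{k-1}$. Thus $(g_1,\dots,g_s)$ with shifts $(b'_1,\dots,b'_s)$ and threshold $T=1$ is a bona fide threshold distinguisher of size $s<(2N)^{k-1}$ with exactly the same accept/reject behaviour on every $w$ as the given $k$-interval distinguisher, hence it too correctly decides $\eta$-closeness versus $\beta$-separation for $\cC$.

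Finally, \Cref{the:imp_thr} (equivalently the bound $1/s\le 7\exp[-\min\{\eta d/6,\,3n\}]$ obtained in its proof) forces $s\ge\tfrac17\exp[\min\{\eta d/6,\,3n\}]$, so $(2N)^{k-1}>\tfrac17\exp[\min\{\eta d/6,\,3n\}]$, which rearranges (using $(1/7)^{1/(k-1)}\ge 1/7$ for $k\ge 2$) to $N>\tfrac{1}{14}\exp[\min\{\eta d/6,\,3n\}/(k-1)]$. Contrapositively, no $k$-interval distinguisher of size at most $\tfrac{1}{14}\exp[\min\{\eta d/6,\,3n\}/(k-1)]$ can decide correctly for all $w$, as claimed. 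The only real work is in the reduction of the first two paragraphs; I expect the main obstacle to be getting the size blow-up exactly right — tracking the $\ell_1$-norm of the coefficients of $Q$ to land on $(2N)^{k-1}$ (hence the constant $\tfrac1{14}$ and the $1/(k-1)$ in the exponent), and choosing the interval boundaries with the correct parity so that $Q(A_{H,b}(w))$ remains an integer bounded away from $0$. (One could also avoid the black-box invocation and instead rerun the proof of \Cref{the:imp_thr} verbatim with $A_{H,b}(w)$ replaced by $Q(A_{H,b}(w))$, using $|Q(A_{H,b}(w))|<(2N)^{k-1}$ and the parity-expansion bound on $|\E Q(A_{H,b}(Cx+e)) - \E Q(A_{H,b}(r))|$; the two routes are essentially identical.)
\end{proof-of}
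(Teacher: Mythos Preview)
Your proposal is correct and follows essentially the same route as the paper: reduce a $k$-interval distinguisher to an ordinary threshold distinguisher by composing $A_{H,b}$ with a degree-$(k-1)$ sign-alternating polynomial whose roots separate the intervals, incurring a size blow-up of at most $(2N)^{k-1}$, and then invoke \Cref{the:imp_thr}.

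The only noteworthy difference is in how the blow-up bound is obtained. The paper works with the factored form $p(z)=\prod_j(z-a_j)$ and implements it operationally: subtracting a constant $a$ pads $H$ with $|a|$ zero rows (at most doubling the size, since $|a|\le N$), and multiplying two distinguishers takes the XOR of all pairs of rows (multiplying the sizes); stringing together $k-1$ factors of size $\le 2N$ gives $(2N)^{k-1}$. You instead expand $Q$ in the monomial basis, realise $A_{H,b}^t$ as a sum of $N^t$ parities, and bound $\sum_t|c_t|N^t\le\prod_j(N+|\tau_j|)\le(2N-1)^{k-1}$ via the elementary-symmetric inequality. Both arrive at the same bound; yours is marginally sharper and your parity bookkeeping (choosing the breakpoints of parity opposite to $N$ so that $Q(A_{H,b}(w))$ is a nonzero odd integer) makes explicit a detail the paper leaves implicit.
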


We prove it by reducing an interval distinguisher to a threshold distinguisher.

\begin{lemma}
Every $k$-interval distinguisher of size $N$ can be simulated by a threshold distinguisher of size at most $(2N)^{k-1}$.
\end{lemma}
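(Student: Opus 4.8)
The plan is to turn the interval decision into a threshold of a single low-degree integer polynomial in the statistic $A(w) := A_{H,b}(w) = \sum_{i=1}^{N}(-1)^{\langle h_i, w\rangle + b_i}$, and then to re-express that polynomial as $A_{(g),(c)}(w)$ for a modestly longer advice. The $k$-interval distinguisher computes $f(A(w))$ for a fixed $f:\mathbb{Z}\to\{0,1\}$ that is constant on $k$ intervals; after merging adjacent intervals carrying the same label, $f$ flips value at most $k-1$ times. I would first exhibit an integer polynomial $q$ of degree $r \le k-1$ with $\mathbf{1}[q(v)\ge 1] = f(v)$ at every value $v$ attainable by $A$, and then note that the threshold distinguisher whose advice realizes $q(A(w))$, with threshold $1$, reproduces the interval distinguisher on every $w$.

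For the polynomial, the key point is that $A(w)=\sum_{i=1}^N(\pm 1)\equiv N\pmod 2$, so $A(w)$ only ever takes values in $\{-N,-N+2,\dots,N\}$. Between any two consecutive attainable values lies a unique integer, congruent to $N+1$ modulo $2$, which $A$ never hits; I would place a breakpoint $\tau_j$ at each such integer where $f$ flips, obtaining $r\le k-1$ breakpoints, and set $q(x)=s\prod_{j=1}^{r}(x-\tau_j)$ for a suitable sign $s\in\{\pm1\}$. Then $q(v)$ is a nonzero integer at every attainable $v$, and $\operatorname{sign} q$ alternates across the $\tau_j$ exactly as $f$ does, so one of the two choices of $s$ gives $\operatorname{sign} q(v) = +1 \iff f(v)=1$, i.e.\ $\mathbf{1}[q(v)\ge 1]=f(v)$.

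To realize $q(A(w))$ as a threshold statistic I would use multiplicativity of parities: $\prod_{t=1}^{i}(-1)^{\langle h_{j_t}, w\rangle + b_{j_t}} = (-1)^{\langle h_{j_1}+\cdots+h_{j_i},\, w\rangle + (b_{j_1}+\cdots+b_{j_i})}$ is again a single parity, so
\[
A(w)^{i} \;=\; \sum_{(j_1,\dots,j_i)\in[N]^i}(-1)^{\langle h_{j_1}+\cdots+h_{j_i},\,w\rangle + (b_{j_1}+\cdots+b_{j_i})}
\]
equals $A_{(g^{(i)}),(c^{(i)})}(w)$ for an advice with $N^i$ rows. Writing $q(x)=\sum_{i=0}^{r}a_i x^i$, I would build $(g,c)$ by taking $|a_i|$ copies of the rows of the $i$-th expansion (flipping the shift bit when $a_i<0$, and using the all-zero parity for $i=0$); then $A_{(g),(c)}(w)=\sum_i a_i A(w)^i = q(A(w))$, and the number of rows is $\sum_{i=0}^{r}|a_i|\,N^i$. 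Since $a_i = \pm e_{r-i}(\tau_1,\dots,\tau_r)$ with every $|\tau_j|\le N$, we get $|a_i|\le\binom{r}{i}N^{r-i}$, so the row count is at most $\sum_{i=0}^{r}\binom{r}{i}N^{r-i}N^{i} = (2N)^{r} \le (2N)^{k-1}$.

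The only real obstacle is squeezing out the clean factor $(2N)^{k-1}$: the naive choice of half-integer breakpoints forces one to clear denominators by scaling $q$ by $2^{r}$, which costs an extra $2^{k-1}$ and yields only $(4N)^{k-1}$. The parity fact $A(w)\equiv N\pmod 2$ is exactly what lets one keep $q$ integral with breakpoints placed at genuine integers in the length-$2$ gaps, avoiding this loss. Everything else — the sign and multiplicity bookkeeping in assembling $(g,c)$, the elementary-symmetric-polynomial bound on the coefficients, and the reduction to at most $k-1$ genuine breakpoints — is routine.
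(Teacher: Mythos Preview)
Your proposal is correct and follows essentially the same approach as the paper: represent the interval decision as the sign of a degree-$(k-1)$ integer polynomial in $A_{H,b}(w)$, then realize that polynomial as a threshold distinguisher via the multiplicativity of $\pm1$-valued parities. The only cosmetic difference is that the paper keeps the polynomial in factored form $\prod_j(z-a_j)$ and builds it by alternating ``subtract a constant'' (pad with zero rows) and ``multiply two distinguishers'' (XOR all pairs of rows) operations, whereas you expand into monomials and bound the coefficients via elementary symmetric polynomials; both routes land on the same $(2N)^{k-1}$ bound, and your parity observation $A(w)\equiv N\pmod 2$ is a cleaner justification for why integer breakpoints suffice than the paper's bare assertion.
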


\begin{proof}
    For every partition of the line into $k$ intervals there is a polynomial $p$ of degree $k-1$ that alternates sign among the intervals.  Given a $k$-interval distinguisher $A_{H,b}$ we construct a threshold distinguisher ${A}_{\hat{H}, \hat{b}}$ so that
    \begin{equation}
    \label{eq:poly_con}
        {A}_{\hat{H},\hat{b}}(w) = p(A_{H,b}(w)). 
    \end{equation}
    The value ${A}_{\hat{H},\hat{b}}(w)$ is positive precisely when $A_{H, b}(w)$ falls into a positive interval.  The polynomial $p$ factorizes as
    \begin{equation*}
        p(z) = \prod_{i=1}^{k-1} (z - a_i),
    \end{equation*}
    where the breakpoints $a_i$ are integers between $-N$ and $N$. 

    The advice $\hat{H}, \hat{b}$ will be constructed by ``applying'' $p$ to $H, b$.  To do so we describe how to subtract a constant from a threshold distinguisher, and how to multiply two threshold distinguishers.

    To subtract a constant $a$ from $H, b$ we pad $H$ with $\abs{a}$ zero rows and pad $b$ with $\abs{a}$ ones if $a > 0$ and $\abs{a}$ zeros if $a < 0$.  The resulting distinguisher $\hat{H}, \hat{b}$ satisfies
    \[ A_{\hat{H}, \hat{b}}(w) = A_{H, b}(w) - a. \]
    
    To multiply two distinguishers $H, b$ and $H', b'$, we create a new distinguisher $\hat{H}, \hat{b}$ whose rows are the XORS of all pairs of rows of $(H, b)$ and $(H', b')$.  Then
    \[ A_{\hat{H}, \hat{b}}(w) = \sum_{h, h'} (-1)^{\langle h + h', w\rangle + b + b'} 
    = \Bigl(\sum\nolimits_h (-1)^{\langle h, w\rangle + b}\Bigr)\Bigl(\sum\nolimits_{h'} (-1)^{\langle h', w\rangle + b'}\Bigr) = A_{H, b}(w) \cdot A_{H', b'}(w). \]
    
    Applying $N$ subtractions and $k - 2$ multiplications we obtain a distinguisher satisfying~\eqref{eq:poly_con}.  Subtraction affects size by at most an additive $N$.  As the original distinguisher has size $N$ it at most doubles it.  Multiplication results in a distinguisher whose size is the product of its parts.
    The size of the final distinguisher is therefore at most $(2N)^{k-1}$.
\end{proof}

\subsection{Optimality of our Distribution}
\label{sec:optdist}

The specific measure $D$ over $m$-bit strings defined in \Cref{sec:decision} is key for our analysis.  By \eqref{eq:val_fourier}, $w$ has nonnegligible bias against tests of relative weight $O((\log n)^2 / n)$ and tiny bias against all $n^{-\Omega(1)}$ balanced tests (assuming $C$ is $n^{-\Omega(1)}$-balanced).  (In contrast, a symmetric product measure with the same expected Hamming weight has negligible bias beyond $w = O(\log n / n)$.)

Is there a better choice of $D$ that remains biased against tests of weight $\omega((\log n)^2 / n)$ yet remains pseudorandom against all balanced tests?  Such a $D$ would have improved the decoding radius $\eta$ in \Cref{the:dec_alg}.  The improved algorithm would have then stood in contradiction to our lower bound \Cref{the:imp_thr}.  We summarize the conclusion:

\begin{lemma}
\label{lem:imp_mmt}
Assume $m > c (\ln 2/\beta) (\ln 1/\gamma) / \beta^2 \eta$ for some absolute constant $c$.  For any distribution $D$ over $\{-1, 1\}^m$, if $\hat{D}(w) \geq \gamma$ for all $w$ of weight at most $\eta m$ (for even $\eta m$ and $\eta \leq 1/4$), there must exist a $\beta$-balanced $w$ for which
\[ \hat{D}(w) \geq \exp \left[- \frac{\ln 2/\beta \cdot \ln 1/\gamma}{\eta}\right]. \]
\end{lemma}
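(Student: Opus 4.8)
The plan is to pass from statements about the Fourier coefficients $\hat D(w)$ for $w$ in a fixed weight range to statements about the \emph{weight distribution} of $D$ itself, via a ``$p$-biased smoothing'' identity. For $p\in[0,\tfrac12]$ let $\mu_p$ be the product distribution on $\bF^m$ that sets each coordinate to $1$ independently with probability $p$. Averaging the definition of $\hat D$ over $w\sim\mu_p$ and using independence gives the exact identity
\[
\Expec{w\sim\mu_p}{\hat D(w)} \;=\; \Expec{h\sim D}{(1-2p)^{\wt h}}.
\]
The point is that $\mu_p$ with small $p$ is concentrated on light $w$, while $\mu_p$ with $p$ slightly below $\tfrac12$ is concentrated on $\beta$-balanced $w$, so this identity lets the light-test hypothesis ``talk to'' a balanced test.

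\emph{Step 1 (extract mass on light strings).} Take $p=\eta/2$ and split $\Expec{w\sim\mu_{\eta/2}}{\hat D(w)}$ according to whether $\wt w\le\eta m$. On that event $\hat D(w)\ge\gamma$ by hypothesis; off it, $\hat D(w)\ge-1$ trivially; and a Chernoff bound gives $\Prob{w\sim\mu_{\eta/2}}{\wt w>\eta m}\le e^{-c_0\eta m}$. Hence
\[
\Expec{h\sim D}{(1-\eta)^{\wt h}} \;=\; \Expec{w\sim\mu_{\eta/2}}{\hat D(w)} \;\ge\; \gamma - 2e^{-c_0\eta m} \;=:\; \gamma_1 .
\]

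\emph{Step 2 (push to a balanced test via convexity).} Take $p=\tfrac12-\tfrac\beta4$, so $1-2p=\beta/2$; the identity gives $\Expec{w\sim\mu_p}{\hat D(w)}=\Expec{h\sim D}{(\beta/2)^{\wt h}}$. Write $(\beta/2)^{\wt h}=\big((1-\eta)^{\wt h}\big)^{r}$ with $r=\ln(2/\beta)/\ln(1/(1-\eta))$, which is $\ge 1$ since $\beta\le 2(1-\eta)$. Applying Jensen's inequality to the convex map $z\mapsto z^{r}$ on $[0,1]$,
\[
\Expec{w\sim\mu_p}{\hat D(w)} \;=\; \Expec{h\sim D}{\big((1-\eta)^{\wt h}\big)^{r}} \;\ge\; \Big(\Expec{h\sim D}{(1-\eta)^{\wt h}}\Big)^{r} \;\ge\; \gamma_1^{\,r}.
\]
Since $p=\tfrac12-\tfrac\beta4$ puts the mean weight $\tfrac12(1-\tfrac\beta2)m$ at distance $\ge\beta m/4$ from both ends of the $\beta$-balanced window, a Chernoff bound gives $\Prob{w\sim\mu_p}{w\text{ not }\beta\text{-balanced}}\le 2e^{-c_1\beta^2 m}$, so some $\beta$-balanced $w^\ast$ satisfies
\[
\hat D(w^\ast) \;\ge\; \Expec{w\sim\mu_p}{\hat D(w)} - 2e^{-c_1\beta^2 m} \;\ge\; \gamma_1^{\,r} - 2e^{-c_1\beta^2 m}.
\]

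\emph{Step 3 (bookkeeping) --- the part I expect to be fiddly.} It remains to show $\gamma_1^{\,r}-2e^{-c_1\beta^2 m}\ge \exp[-\ln(2/\beta)\ln(1/\gamma)/\eta]$. From $\ln(1/(1-\eta))\ge\eta$ we get $r\le \ln(2/\beta)/\eta$, hence $\gamma_1^{\,r}\ge\gamma_1^{\ln(2/\beta)/\eta}$, which equals $\exp[-\ln(2/\beta)\ln(1/\gamma)/\eta]$ up to the $(1-o(1))$ factor separating $\gamma_1$ from $\gamma$. To absorb that factor and the additive error I would use the sharper bound $\ln(1/(1-\eta))\ge\eta(1+\eta/2)$, which for $\eta\le 1/4$ yields $r\le \ln(2/\beta)/\eta-\tfrac49\ln(2/\beta)$ and hence a genuine multiplicative cushion $(1/\gamma_1)^{\Omega(\ln(2/\beta))}\ge 1$ in $\gamma_1^{\,r}$; the hypothesis $m>c\,\ln(2/\beta)\ln(1/\gamma)/(\beta^2\eta)$ with $c$ a large enough absolute constant makes $e^{-c_0\eta m}$ and $e^{-c_1\beta^2 m}$ negligible next to the target. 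The one genuinely delicate point is the near-degenerate regime $\gamma\to 1$, where the target tends to $1$ and the cushion shrinks, so the error terms must be controlled more tightly (or one simply assumes $\gamma$ bounded away from $1$, which costs nothing in the application, since there $\hat D$ is then very close to $1$ on some balanced $w$ anyway). The remaining ingredients --- the two Chernoff bounds, the smoothing identity, and the single Jensen step --- are routine.
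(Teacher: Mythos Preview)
Your proof is correct and reaches the same conclusion as the paper's, but by a more direct route. Both arguments ultimately lower-bound $\E_{w\sim\mu_p}[\hat D(w)]$ with $1-2p=\beta/2$ and then extract a $\beta$-balanced $w$ via a Chernoff bound on $\mu_p$. The paper arrives at this quantity indirectly: it samples $t=\eta m$ indices \emph{with replacement} (so the resulting test has weight $\le t$ deterministically, giving $\E[(\tfrac{1}{m}\sum_i X_i)^t]\ge\gamma$ without any tail bound), invokes monotonicity of moments to pass from the power $t$ to a Poisson power $K$, and then uses Poisson splitting to convert the left side into your Bernoulli average. You bypass the Poisson machinery entirely by writing the identity $\E_{w\sim\mu_p}[\hat D(w)]=\E_h[(1-2p)^{\wt h}]$ at the two values $p=\eta/2$ and $p=\tfrac12-\tfrac\beta4$ and bridging them with a single Jensen step on the nonnegative variable $(1-\eta)^{\wt h}$; the price is the extra Chernoff bound in Step~1. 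Your route is shorter and also sidesteps a minor subtlety in the paper's moment inequality (the base $\tfrac{1}{m}\sum_i X_i$ can be negative, so the claimed $\E[Y^K]\ge(\E[Y^t])^{K/t}$ needs some care for odd $K$, whereas your $(1-\eta)^{\wt h}$ is always nonnegative). The $\gamma\to1$ edge case you flag is equally present in the paper's argument; neither handles it explicitly, and it is irrelevant to the intended application.
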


In our application to decoding, when $\beta = n^{-\Theta(1)}$ and $\gamma = n^{-\Theta(1)}$, the lower bound reads $\hat{D}(w) \geq \exp -O((\log n)^2 / \eta)$.  When $\eta$ is $\Theta((\log n)^2 / n)$ some $\beta$-balanced test has bias at least $\exp -O(n)$, matching~\eqref{eq:val_fourier}.

\Cref{lem:imp_mmt} is a statement about distributions over $m$-bit strings that makes no reference to codes and algorithms.  Qualitatively, it says that if a distribution has noticeable bias against all light tests, then it must have some bias against some balanced test.  We give a direct analytical proof of this.

\begin{proof-of}{Lemma \ref{lem:imp_mmt}}
Let $X$ be a sample of $D$ and $t = \eta m$.  Let $I$ be a uniformly random index in $\{1, \dots, m\}$.  By monotonicity of moments, for every $K \geq t$,
\[ \E (\E X_I | X)^K \geq \bigl(\E (\E X_I | X)^t\bigr)^{K/t}. \]
We can expand $\E (\E X_I | X)^t$ as $\E X_{I_1} \cdots X_{I_t}$, where $I_1, \dots, I_t$ are independent replicas of $I$.  By our assumption on the light tests, for every fixing of $I_1, \dots, I_t$ this expression is at least $\gamma$.  Therefore
\[ \bigl(\E (\E X_I | X)^t\bigr)^{K/t} \geq \gamma^{K/t}. \]
Let $K$ be a Poisson random variable of rate $\lambda m$ (to be determined).  Averaging over $K$ we obtain
\[ \E (\E X_I | X)^K \geq \E \gamma^{K/t} \cdot 1(K \geq t) \geq \E \gamma^{K/t} - \pr(K < t). \]
as $\gamma \leq 1$.  Using the formula for the Poisson moment generating function,
\[ \E \gamma^{K/t} = \exp \lambda m(\gamma^{1/t} - 1) \geq \gamma^{\lambda m/t} = \gamma^{\lambda / \eta}. \]
and so
\begin{equation}
\label{eq:imp1}
 \E (\E X_I | X)^K \geq \gamma^{\lambda / \eta} - \pr\bigl(\text{Poisson}(\lambda m) < \eta m\bigr). 
\end{equation}
The left-hand side also expands into a product of a $K$ independent replicas $\E X_{I_1} \cdots X_{I_K}$.  Let $N_i$ be the number of times $X_i$ occurs in this expression, i.e. $N_i$ is the number of indices $j$ for which $I_j = i$.  By the additivity of Poisson random variables, the $N_i$ are independent Poissons of rate $\lambda$.  Therefore
\begin{equation}
\label{eq:imp2}
\E (\E X_I | X)^K = \E X_1^{N_1} \cdots X_m^{N_m} = \E X_1^{\oplus N_1} \cdots X_m^{\oplus N_m} = \E \hat{D}(\oplus N),
\end{equation}
where $\oplus N_i = N_i \bmod 2$ and $\oplus N = (\oplus N_1, \dots, \oplus N_m)$, because $X_i^2 = 1$.  The random variables $\oplus N_i$ inherit their independence from $N_i$. Marginally they are Bernoulli of bias
\[ \E (-1)^{\oplus N_i} = \E (-1)^{N_i} = \exp -2\lambda. \]
again using the formula for the Poisson moment generating function.  

We choose $\lambda$ so that $\beta = 2\exp -2\lambda$.  By Chernoff bounds, $\oplus N$ is $\beta$-balanced except with probability $\exp -\Omega(\beta^2 m)$.  By the total probability theorem,

\[ \E \hat{D}(\oplus N) \leq \max_{\text{$w$ is $\beta$-balanced}} \hat{D}(w) + \exp -\Omega(\beta^2 m). \]
Plugging into~\eqref{eq:imp1} and~\eqref{eq:imp2},
\[ 
\max_{\text{$w$ is $\beta$-balanced}} \hat{D}(w) \geq \gamma^{(\ln 2/\beta) /2\eta} - \exp -\Omega(\beta^2 m) - \pr\bigl(\text{Poisson}(\lambda m) < \eta m\bigr). \]
Under our assumptions on $m$ and $\eta$ and the Poisson large deviation inequality~\cite[Theorem~A.8]{canonne} the leading term dominates and gives the desired bound.
\end{proof-of}

\section*{Acknowledgements}

Work supported by an NSERC Discovery Grant, and by the National Research Foundation, Singapore, under its NRF Fellowship programme, award no. NRF-NRFF14-2022-0010.

\bibliographystyle{alpha}
\bibliography{refs}



\end{document}